\documentclass[11pt,a4paper]{article}
\usepackage[T1]{fontenc}
\usepackage{lmodern,microtype}
\usepackage[margin=25mm]{geometry}
\usepackage{amsmath,amssymb,amsthm,mathtools,bm}
\usepackage{enumitem,booktabs}
\usepackage{xcolor}
\usepackage[colorlinks=true,linkcolor=blue!45!black,citecolor=blue!45!black,urlcolor=blue!45!black]{hyperref}
\hypersetup{pdftitle={Limits of adaptive two-way linear postprocessing in six-state quantum key distribution},pdfauthor={Toyohiro Tsurumaru}}
\setlist{itemsep=2pt,topsep=4pt}
\newtheorem{theorem}{Theorem}
\newtheorem{lemma}[theorem]{Lemma}
\newcommand{\E}{\mathbb E}
\newcommand{\Prob}{\mathbb P}
\DeclareMathOperator{\rank}{rank}
\DeclareMathOperator{\Bern}{Bern}
\newcommand{\F}{\mathbb F_2}
\newcommand{\pg}{p_{\mathrm{guess}}}
\newcommand{\wt}{\operatorname{wt}}
\newcommand{\Tr}{\operatorname{Tr}}
\newcommand{\ket}[1]{\lvert #1\rangle}
\newcommand{\bra}[1]{\langle #1\rvert}
\newcommand{\proj}[1]{\ket{#1}\!\bra{#1}}
\newcommand{\acc}{\mathrm{acc}}
\newcommand{\eps}{\varepsilon}
\title{Limits of adaptive two-way linear postprocessing\\in six-state quantum key distribution}
\author{Toyohiro Tsurumaru\\[3pt]
\small Mitsubishi Electric Corporation, Information Technology R\&D Center\\
\small 5-1-1 Ofuna, Kamakura-shi, Kanagawa 247-8501, Japan}
\date{}
\begin{document}
\maketitle
\begin{abstract}
Two-way classical communication improves the noise tolerance of quantum key distribution, but the limits of adaptive parity processing remain unclear. For the six-state protocol with equal three-basis error rate $p\in[p_*,1/2]$, where $p_*=(5-\sqrt5)/10\simeq27.64\%$, we construct a single compatible collective attack that defeats a broad class of linear postprocessing. Each public message may be a linear function of the sender's raw key, chosen adaptively from the public history. For every finite input length and every accepted transcript of positive probability, Eve's optimal probability of recovering any linear function of Alice's raw key is at least Bob's. Consequently, reliable agreement on a linear final key is incompatible with secrecy. The same conclusion holds for arbitrary publicly seeded privacy amplification after reliable recovery of a linear candidate. The proof conditions on the complete transcript, which restricts the two raw keys to affine subspaces. A local degradation of Eve's state, a known R\'enyi-entropy comparison, and a coset-probability inequality then give the recovery comparison. Both inequalities are proved in self-contained appendices. The result allows block splitting, regrouping, nonzero-syndrome selection, and adaptive linear reconciliation.
\end{abstract}

\section{Introduction}
Two-way classical communication can improve the noise tolerance of quantum key distribution (QKD). For the six-state protocol, bit-error detection and parity processing yield secure keys for QBERs strictly below the threshold~\cite{GottesmanLo,Chau,Bae}
\begin{equation}
 p_*:=\frac{5-\sqrt5}{10}\simeq0.2763932023. \label{eq:threshold}
\end{equation}
In contrast, an equal three-basis quantum bit error rate (QBER) of $1/3$ is compatible with an intercept--resend attack, which excludes secret-key generation by any classical postprocessing. The interval between these values therefore poses a natural question: can more general classical processing improve on the established threshold?~\cite{Khatri}

Linear codes provide a systematic way to enlarge parity-based protocols. They permit arbitrary checks, retention of multiple bits, and selection among different error syndromes. Du \emph{et al.}\ recently developed an advantage-distillation framework based on general linear codes and conjectured that repetition codes provide the best threshold within their framework~\cite{Du}. They also identified adaptive coding over multiple rounds as a direction for further study. A converse for fixed repetition-code constructions does not by itself settle what such more general processing can accomplish.

We prove a finite-length recovery converse for adaptive linear public communication. At each round, either party may announce a linear function of its original raw key. The sender and matrix may depend on the entire preceding public history, including public randomness. At the end, Alice's candidate is a linear function of her raw key, again chosen from that history. Bob may retain his complete raw key and use any decoder. Thus, the restriction is on the disclosed functions and Alice's output, not on the complexity or linearity of Bob's decoding computation.

For every $p\in[p_*,1/2]$, one collective attack, independent of the protocol, makes Eve at least as successful as Bob at recovering Alice's candidate. The comparison holds separately for every accepted public transcript of positive probability. It therefore survives arbitrary selection based on that transcript and does not require acceptance with probability approaching one. It also rules out direct agreement on a linear final key, without assuming recovery of a longer intermediate string.

The proof uses the affine structure of a fixed communication transcript. Alice's and Bob's compatible inputs form a product of two affine subspaces. Eve knows the raw error pattern under our attack; conditioning additionally on that pattern leaves Alice uniform on an affine subspace whose direction is the intersection of the two linear spaces. We evaluate Eve's information on that subspace and compare it with Bob's posterior coset probabilities. This avoids tracking surviving blocks through successive processing stages.

The conclusion is restricted to the stated class. In particular, it does not establish zero unrestricted secret-key capacity. Its relation to existing advantage-distillation and symmetric-extension results is discussed in Sec.~\ref{sec:scope}.

\section{Main result}\label{sec:model}
\subsection{Protocol class}
We now specify the six-state QKD protocols and the class of classical postprocessing considered in this paper.

Alice prepares uniformly chosen eigenstates of the Pauli $X,Y,Z$ observables, and Bob measures in independently chosen bases. We consider equal observed error rates $p_X=p_Y=p_Z=p$. After sifting, the undisclosed matched-$Z$ data form the raw keys. A converse needs only one collective attack compatible with these three error rates. Under such an attack, separate parameter-estimation samples can be conditioned on without changing the remaining independent raw pairs.

Fix a finite raw length $n$ and write the raw keys as column vectors $A,B\in\F^n$. Public communication is authenticated and visible to Eve. Public randomness, if used, consists of finite random seeds drawn independently of the initial joint state of the raw keys and Eve at the start of classical postprocessing. The public transcript records these seeds as they are announced and all messages of the form below; the public history $h$ is the portion disclosed so far. Conditioned on the seeds, the protocol is a finite communication tree. At a node with public history $h$, the protocol chooses its sender and announces either
\begin{equation}
 M_hA \quad\text{or}\quad M_hB. \label{eq:message}
\end{equation}
The sender, binary matrix $M_h$, and message length are determined by $h$. Subsequent choices, termination, and acceptance also depend only on the public history. A publicly determined control flag adds no constraint beyond the messages already in that history.

At an accepted leaf with complete transcript $T=t$, Alice outputs
\begin{equation}
 K=F_tA\in\F^{\ell(t)}. \label{eq:output}
\end{equation}
The binary matrix $F_t$ and output length $\ell(t)$ are determined by $t$; $F_t$ need not have full rank. Bob estimates $K$ from $(B,t)$; his decoder may be nonlinear or randomized. There is no bound on computational complexity. Alice's output and the public messages do not use additional secret local randomness. Giving Bob his entire original word allows him to exploit even coordinates omitted from the candidate.

This definition permits parity tests on arbitrary coordinate sets, adaptive block splitting and regrouping, coordinate permutations, and acceptance of any publicly specified syndrome classes. For example, after publishing $HA$ and $HB$, the parties may retain a block for any chosen values of $H(A+B)$, then select their next checks from those values. All subsequent linear reconciliation messages may be included in the same transcript.

Public-history-dependent offsets could also be added to Eqs.~\eqref{eq:message} and~\eqref{eq:output}. They amount to reversible relabelings of the public messages and the output key, determined by the public history. These relabelings preserve recovery probabilities and secrecy, so we omit the offsets without loss of generality.

Special cases include the standard B/P processing of Refs.~\cite{GottesmanLo,Chau} and the public-parity-check formulation of repetition-code advantage distillation~\cite{Bae}, with retained bits given by linear functions.

\subsection{Recovery comparison}
For a classical variable $X$ and side information $Q$, let $\pg(X\mid Q)$ be the optimal probability of guessing the entire value of $X$, optimized over measurements when $Q$ is quantum. All conditional states below are normalized.

\begin{theorem}[Recovery under adaptive linear communication]\label{thm:main}
For each $p\in[p_*,1/2]$, there is a collective attack $\mathcal A_p$, compatible with QBER $p$ in all three bases, such that the following holds for every finite $n$ and every protocol defined above. For every accepted transcript $t$ of positive probability,
\begin{equation}
 \pg(K\mid E,T=t)\ge \pg(K\mid B,T=t). \label{eq:main}
\end{equation}
Here $E$ contains Eve's systems for all $n$ raw positions. The attack depends only on $p$, not on the protocol, its public seeds, or the output map.
\end{theorem}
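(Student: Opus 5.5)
We take $\mathcal A_p$ to be the collective attack in which Eve holds a purification of the Bell-diagonal two-qubit state $\rho_{AB}=\sum_{u}\lambda_u\proj{\Phi_u}$, with weights $\lambda_I=1-3p/2$ and $\lambda_X=\lambda_Y=\lambda_Z=p/2$. These weights give QBER $p$ in every basis. We then degrade Eve slightly by giving her only part of the purification. Per position, her system $E_i$ is taken to contain a classical register with the error bit $e_i=A_i+B_i$, together with a quantum system whose states $\rho_{E}^{a,e}$ depend on Alice's bit $a$ only through a qubit-like ensemble. Any degradation of Eve's state only strengthens the claim, since it can only lower $\pg(K\mid E,T=t)$. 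Under this attack the raw pairs are i.i.d., with $A$ uniform and $e\sim\Bern(p)^{\otimes n}$ independent of $A$. Eve's conditional states factorize over positions, given $(A,e)$.

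First we fix an accepted transcript $t$ and its seeds. Every message is $M_hA$ or $M_hB$, so the transcript restricts $(A,B)$ to a product of affine subspaces $(a_0+V_A)\times(b_0+V_B)$. Here $V_A$ and $V_B$ are the kernels of the stacked matrices announced by Alice and by Bob. Public seeds are independent of everything else, so conditioning on them is harmless. We next condition additionally on the error pattern $e$, which Eve holds. Since $B=A+e$, $A$ is then uniform on an affine subspace $a_0'+(V_A\cap V_B)$, possibly empty for this $e$.

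We then bound Bob and Eve separately.

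\emph{Bob.} Bob's optimal guess of $K=F_tA$ from $B$ succeeds with probability
\[
\pg(K\mid B,T=t)=\sum_b \Pr(b\mid t)\,\max_k \Pr(F_tA=k\mid b,t).
\]
Given $b$, the posterior on $A=b+e$ is proportional to $p^{\wt(e)}(1-p)^{n-\wt(e)}$ over the admissible $e$. So Bob's success is a maximal coset probability of a $\Bern(p)$-weighted affine set, projected through $F_t$.

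\emph{Eve.} Given $e$, Eve faces $A$ uniform on an affine space $W_e$ and holds the product state $\bigotimes_i\rho^{A_i,e_i}_{E_i}$. We lower-bound her guessing probability of $F_tA$ using a pretty-good-measurement / collision (Rényi-2) bound. The collision quantity $\Tr(\sum_k\sigma_k^2)$ of the cq-state restricted to $W_e$ is a sum over pairs of characters. This is the known Rényi-entropy comparison: per position, the Bell-diagonal weights with $p\ge p_*$ make Eve's single-qubit overlap at most the Bob-side quantity $(1-2p)^2$-type term, and it tensorizes over the subspace. The threshold $p_*$ is exactly where the per-position inequality becomes tight, since $(1-2p)^2$ matches Eve's fidelity parameter there.

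Combining the two bounds is the coset-probability inequality. It must show that Eve's collision-based lower bound, averaged over $e$, dominates Bob's maximal posterior coset mass for every pair of linear spaces $V_A,V_B$ and every $F_t$. The plan is to Fourier-expand both sides over the dual of $V_A\cap V_B$ and to use the per-character comparison, then sum with nonnegative coefficients.

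I expect this last step to be the main obstacle: turning a max-type quantity for Bob and a collision-type bound for Eve into a single inequality valid uniformly over arbitrary subspaces. I would first try to prove it with $K$ replaced by its full coset structure, i.e.\ taking $F_t=I$ after quotienting by $\ker F_t$. After that, monotonicity of both sides under the coarse-graining $A\mapsto F_tA$ would extend the result to general $F_t$.
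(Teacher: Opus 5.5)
Your reduction to the transcript rectangle $(a_0+V_A)\times(b_0+V_B)$ and the conditioning on $e$ match the paper. However, the three inequalities that carry the proof are missing, and the one you describe for Eve is misidentified.

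\textbf{Eve's side.} The attack must be degraded to a \emph{specific} overlap. Eve's no-error states have overlap $\beta=(1-2p)/(1-p)$. She maps them, and the orthogonal error-branch states, to pure states $\ket{\phi_a}$ of overlap $\gamma=\sqrt\alpha$ with $\alpha=p/(1-p)$. This is possible iff $\beta\le\gamma$, i.e.\ $p\ge p_*$, and it is the only place $p_*$ enters. Its purpose is that Eve's squared overlaps $|\langle\phi_a|\phi_b\rangle|^2=\alpha^{\wt(a+b)}$ and Bob's posterior weights $\alpha^{\wt(z)}$ are governed by the same weight sums $Z_C(\alpha)=\sum_{c\in C}\alpha^{\wt(c)}$.

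The R\'enyi comparison is not a per-position inequality that tensorizes. It is the parameter-free statement $H(L_1W)-H(L_2W)\ge H_2(L_1W)-H_2(L_2W)$ for nested linear images of an independent Bernoulli vector. It is applied to the spectra of $\rho_D$ and $\rho_J$, where $D=V_A\cap V_B$ and $J=D\cap\ker F_t$; a character computation on the Gram matrix shows these spectra are laws of such images. Since $H(K\mid Q)=r+S(\rho_J)-S(\rho_D)$ with $r=\dim D-\dim J$, combining this with $\pg\ge 2^{-H(K\mid Q)}$ and $\Tr\rho_C^2=Z_C(\alpha)/|C|$ gives $\pg(K\mid E,T=t)\ge Z_J(\alpha)/Z_D(\alpha)$.

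This bound equals $\Tr\rho_{KQ}^2/\Tr\rho_Q^2$ for the state given $N=e$. But such a collision ratio does not lower-bound the guessing probability of general cq-states; it fails already classically. So a generic pretty-good-measurement or collision argument will not produce it. Moreover, your unnormalized $\Tr\sum_k\sigma_k^2$ is not a conditional quantity at all.

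\textbf{Bob's side and the comparison.} The step you flag as the main obstacle needs no Fourier analysis. The paper uses the coset inequality $Z_{u+G}Z_{v+G}\le Z_GZ_{u+v+G}$ twice.
\begin{itemize}
\item Take $G=H=V_A\cap\ker F_t$ and sum over representatives of $V_A/H$. This shows every posterior coset mass $Z_{w+H}/Z_{w+V_A}$ is at most $Z_H/Z_{V_A}$, so Bob's maximum is bounded uniformly in $b$.
\item Take $G=J$ and sum over $D/J\times H/J$. This gives $Z_DZ_H\le Z_JZ_{D+H}\le Z_JZ_{V_A}$, i.e.\ $Z_J/Z_D\ge Z_H/Z_{V_A}$, which closes the chain.
\end{itemize}
An expansion over the dual of $V_A\cap V_B$ is mismatched to Bob's side, whose uncertainty lives on cosets of $V_A$.

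Your reduction to $F_t=I$ also fails. Coarse-graining $A\mapsto F_tA$ raises \emph{both} guessing probabilities, so $\pg(A\mid E)\ge\pg(A\mid B)$ implies nothing about $F_tA$. In fact $F_t=I$ is the trivial case: $H=J=\{0\}$, and $1/Z_D\ge1/Z_{V_A}$ because $D\subseteq V_A$. All the content lies in how $\ker F_t$ interacts with $V_A$ and $V_B$, which is exactly what the subspace inequality handles.
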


The statement is a comparison with Bob's optimal decoder, so it also applies to any decoder used in an implementation. Eve's measurement may depend on $t$ and may be collective; no efficient implementation of it is claimed.

Suppose accepted transcripts produce a fixed $\ell\ge1$ bit key and Bob's error probability conditioned on acceptance is $\eps_B$. Averaging Eq.~\eqref{eq:main} gives Eve a guessing probability of at least $1-\eps_B$. Consequently, the conditional secrecy distance obeys
\begin{equation}
 \Delta_{\acc}:=\tfrac12\bigl\|\rho_{KET\mid\acc}
 -\tau_\ell\otimes\rho_{ET\mid\acc}\bigr\|_1
 \ge 1-\eps_B-2^{-\ell}, \label{eq:security}
\end{equation}
where $\tau_\ell$ is the uniform key state. Indeed, a measurement that guesses $K$ succeeds with probability $2^{-\ell}$ on the ideal state. Reliable agreement therefore precludes conditional secrecy. If acceptance has probability $a$, the usual acceptance-weighted secrecy distance is $a\Delta_{\acc}$; Eq.~\eqref{eq:security} does not assert a nonvanishing unconditional distance when $a\to0$. Variable output lengths can be handled transcript by transcript.

The theorem also applies when $K$ is a longer \emph{candidate} that Bob reconstructs before privacy amplification. Eve can reconstruct that candidate with no larger error and apply the same public function. Thus arbitrary publicly seeded privacy amplification, including nonlinear functions, cannot produce a secret nonempty key from a reliably recovered candidate. This conclusion assumes that any additional communication or acceptance test is covered by the stated protocol class.

As a consequence, the Devetak--Winter key-rate lower bound~\cite[Theorem~2.1]{DW} is nonpositive under $\mathcal A_p$ for an independently repeatable finite-block output with a fixed-length linear candidate in this protocol class. Indeed, a positive value would enable linear information reconciliation and publicly seeded linear privacy amplification~\cite{SW,RK}, yielding a secure protocol within the same class and contradicting Theorem~\ref{thm:main}.

\subsection{Scope and relation to previous work}\label{sec:scope}
Theorem~\ref{thm:main} includes arbitrary linear parity checks and output maps, with multiple retained bits. It also permits choices based on previously disclosed parity values. Splitting a retained block, combining it with another block, or returning to previously unused coordinates changes the matrices in Eq.~\eqref{eq:message}, but not the fixed-transcript argument.

For the framework of Du \emph{et al.}~\cite{Du}, the theorem covers the processing based on publicly disclosed linear syndromes without one-time-pad encryption or secret noise addition, including retention of nonzero error-syndrome classes. It also covers subsequent rounds whose checks depend on the disclosed syndromes and linear reconciliation of the retained data. Their full framework includes further resources and preprocessing choices, so the present result does not resolve every version of their threshold conjecture. The precise conclusion is that the covered linear interactions and candidate-recovery procedures cannot surpass $p_*$.

Earlier analyses establish the same threshold for repetition-code advantage distillation followed by arbitrary one-way key distillation~\cite{Bae}. Symmetric-extension arguments exclude arbitrary subsequent one-way key distillation for certain selection classes: in particular, linear parity selection retaining one logical bit~\cite{Myhr2009}, and linear advantage distillation retaining two logical bits at arbitrary block length~\cite[Chap.~9]{MyhrThesis}. Khatri and L\"utkenhaus studied a broader selection problem and supplied analytical and numerical evidence for a general impossibility conjecture~\cite{Khatri}. Our recovery statement accommodates an arbitrary number of bits and adaptive linear interaction, but does not establish preservation of symmetric extension or exclude arbitrary nonlinear distillation. These are different boundaries on the protocol class.

A secretly randomized or nonlinear Alice output is not covered by Eq.~\eqref{eq:output}. Nonlinear privacy amplification is covered only when Bob reliably recovers a linear candidate first. Secret noisy preprocessing and directly shared nonlinear outputs therefore remain outside this result.

\section{Proof of Theorem~\ref{thm:main}}\label{sec:proof}
All vector spaces and matrix operations are over $\F$, and all logarithms are base two. For $0<q\le1$, a subspace $C\subseteq\F^n$, and a coset $w+C$, define
\begin{equation}
 Z_{w+C}(q):=\sum_{c\in C}q^{\wt(w+c)},\qquad
 Z_C(q):=Z_{0+C}(q). \label{eq:weights}
\end{equation}
All weight sums in the transcript proof are evaluated at the same positive parameter $\alpha$.

\subsection{A common attack with a pure-state alphabet}\label{sec:attack}
For $p\in[p_*,1/2]$, put
\begin{equation}
 \alpha=\frac{p}{1-p},\qquad \gamma=\sqrt\alpha,\qquad
 \ket{\phi_a}=\sqrt{\frac{1+\gamma}{2}}\ket0
             +(-1)^a\sqrt{\frac{1-\gamma}{2}}\ket1.
 \label{eq:alphabet}
\end{equation}
Our attack produces, on each matched-$Z$ pair, the state
\begin{equation}
 \omega_p=\frac12\sum_{a,e\in\F}p_e\,
 \proj a_A\otimes\proj{a+e}_B\otimes\proj e_{E_1}
 \otimes\proj{\phi_a}_{E_2},\qquad p_0=1-p,\quad p_1=p.
 \label{eq:attack}
\end{equation}
Thus $A$ is uniform, $B=A+N$, and $N$ has independent Bernoulli$(p)$ coordinates. Eve holds $N$ exactly in $E_1$ and the product state
\begin{equation}
 \ket{\phi_a}=\bigotimes_{j=1}^n\ket{\phi_{a_j}} \label{eq:product}
\end{equation}
in $E_2$. We use the same notation for the single-bit and product states, with their arguments specifying the length.

\begin{lemma}[Physical realization]\label{lem:attack}
The state~\eqref{eq:attack} is obtainable by a basis-independent collective attack compatible with equal three-basis QBER $p$, followed by a local channel on Eve's system.
\end{lemma}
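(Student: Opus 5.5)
The plan is to start from the standard symmetric six-state collective attack, which yields a Bell-diagonal state. Conditioning on Alice's and Bob's $Z$ outcomes gives Eve two pure states per error value, and a local channel that depends on the error bit turns those into the alphabet~\eqref{eq:alphabet}. The threshold $p\ge p_*$ should appear exactly as the condition that this channel exists.

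\textbf{Step 1 (Bell-diagonal attack).} Let Eve distribute, in every round and independently of the bases, the purification
\[
\ket\Psi=\sum_{x,z\in\F}\sqrt{\lambda_{xz}}\,\ket{\Phi_{xz}}_{AB}\ket{x,z}_{E},
\]
where $\Phi_{xz}$ is the Bell state with bit-flip $x$ and phase-flip $z$. Choose
\[
\lambda_{00}=1-\tfrac{3p}{2},\qquad \lambda_{01}=\lambda_{10}=\lambda_{11}=\tfrac p2 .
\]
Then the $Z$-, $X$- and $Y$-error rates are $\lambda_{10}+\lambda_{11}$, $\lambda_{01}+\lambda_{11}$ and $\lambda_{01}+\lambda_{10}$, each equal to $p$. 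These weights are nonnegative because $p\le1/2<2/3$. In every basis Alice's bit is uniform and the error is independent Bernoulli$(p)$, since the attack is i.i.d.\ across rounds.

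\textbf{Step 2 (conditional states).} Project onto $\ket a_A\ket{a+e}_B$ in the $Z$ basis. Eve's conditional state is proportional to
\[
\sqrt{\lambda_{e0}}\,\ket{e,0}+(-1)^a\sqrt{\lambda_{e1}}\,\ket{e,1},
\]
up to a convention-dependent sign that can be absorbed. Eve can therefore measure $e$ nondestructively and copy it into $E_1$, since the $e=0$ and $e=1$ sectors are orthogonal. This gives the $\proj e_{E_1}$ factor with $p_0=1-p$ and $p_1=p$.

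Within each sector, let $\ket{\psi^{(e)}_a}$ denote the normalized conditional state. For $e=0$ the two states have real overlap
\[
\bigl\langle\psi^{(0)}_0\big|\psi^{(0)}_1\bigr\rangle=\frac{\lambda_{00}-\lambda_{01}}{1-p}=\frac{1-2p}{1-p}.
\]
For $e=1$ the overlap is $(\lambda_{10}-\lambda_{11})/p=0$, so the two states are orthogonal.

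\textbf{Step 3 (local channel).} The target states $\ket{\phi_a}$ have overlap $\langle\phi_0|\phi_1\rangle=\gamma=\sqrt{p/(1-p)}\le1$ for $p\le1/2$.

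In the $e=1$ sector the inputs are orthogonal. Eve can measure $a$ perfectly and prepare $\ket{\phi_a}$.

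In the $e=0$ sector I would use an isometry
\[
\ket{\psi^{(0)}_a}\mapsto\ket{\phi_a}_{E_2}\ket{\chi_a}_{E_3},
\qquad \langle\chi_0|\chi_1\rangle=\frac{1-2p}{(1-p)\gamma}.
\]
Such an isometry exists precisely when inner products are preserved and this value lies in $[0,1]$; afterwards Eve traces out $E_3$. Nonnegativity holds because $p\le1/2$. The upper bound reads $1-2p\le\sqrt{p(1-p)}$, which is equivalent to $5p^2-5p+1\le0$, that is $p\ge(5-\sqrt5)/10=p_*$.

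Assembling the sectors, applying the channel controlled on $E_1$, gives exactly the state $\omega_p$ of Eq.~\eqref{eq:attack}, and it acts locally on Eve's systems.

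The main point requiring care is Step 3's pure-to-pure conversion. Raising an overlap by an isometry followed by partial trace is standard, but the sign and phase conventions of the Bell basis have to be fixed so that both overlaps are real and nonnegative. Otherwise a relabeling $a\mapsto a$ with a phase must be inserted, still a local unitary, before the argument goes through.
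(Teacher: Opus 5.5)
Your proposal is correct and is essentially the paper's proof. Your Bell-diagonal purification with weights $1-3p/2$ and $p/2$ is the source-replaced form of the paper's Pauli-channel isometry $V_p$ applied to each transmitted qubit; say this explicitly, since the protocol is prepare-and-measure, and from there you reach the same conditional states, the same overlaps $\beta=(1-2p)/(1-p)$ and $0$, and the same overlap-raising isometry with environment overlap $\beta/\gamma$, which exists exactly when $p\ge p_*$. The only other difference is cosmetic: you treat the orthogonal $e=1$ sector by measure-and-prepare rather than by the same isometry with zero input overlap, and both work.
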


The proof of Lemma~\ref{lem:attack}, including the attack construction, is given in Sec.~\ref{sec:attack-realization}. Before the local degradation, the overlap of Eve's two states in the no-bit-error branch is
\begin{equation}
 \beta=\frac{1-2p}{1-p}.
\end{equation}
She can map these states to the alphabet in Eq.~\eqref{eq:alphabet} precisely in the required range
\begin{equation}
 \beta\le\gamma
 \quad\Longleftrightarrow\quad
 (1-2p)^2\le p(1-p)
 \quad\Longleftrightarrow\quad p\ge p_* \qquad(0\le p\le1/2).
 \label{eq:overlap}
\end{equation}
The bit-error branch consists of orthogonal states and can be degraded to the same alphabet. Choosing the larger overlap $\gamma=\sqrt\alpha$ deliberately weakens Eve so that the same parameter $\alpha$ controls both her squared state overlaps and Bob's posterior error probabilities. The original overlap $\beta$ is needed only to justify this physical realization.

\subsection{Eve's information on an affine subspace}\label{sec:entropy}
We first identify the spectrum of Eve's average state with a classical probability distribution. Write $S(\rho)=-\Tr\rho\log\rho$ and $H_2(\rho)=-\log\Tr\rho^2$. For a classical variable, $H$ and $H_2$ denote its Shannon and collision entropies; $H(X\mid Q)$ denotes conditional von Neumann entropy when $Q$ is quantum. For a subspace $C\subseteq\F^n$, define
\begin{equation}
 \rho_C=\frac1{|C|}\sum_{a\in C}\proj{\phi_a},
 \label{eq:average-state}
\end{equation}
using the product states in Eq.~\eqref{eq:product}.

\begin{lemma}[Classical representation of the spectrum]\label{lem:spectrum}
Let $C\subseteq\F^n$ have dimension $k$, and let $G\in\F^{k\times n}$ be a full-row-rank generator matrix of $C$. For $0\le\gamma\le1$, let $W=(W_1,\ldots,W_n)^{\mathsf T}$ have independent Bernoulli$((1-\gamma)/2)$ coordinates. The eigenvalues of $\rho_C$ are the list
\begin{equation}
 \lambda_C(s)=\Pr[GW=s],\qquad s\in\F^k,
 \label{eq:spectrum-law}
\end{equation}
supplemented with $2^n-2^k$ zeros. In particular,
\begin{equation}
 S(\rho_C)=H(GW),\qquad H_2(\rho_C)=H_2(GW).
 \label{eq:spectrum-entropies}
\end{equation}
\end{lemma}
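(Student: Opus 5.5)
Our plan is to write $\rho_C$ as a Gram-type operator and identify its nonzero spectrum with that of a $2^k\times2^k$ matrix diagonalized by characters. Put $V=\sum_{a\in C}\ket{\phi_a}\bra{a}$, an operator from $\mathbb C^{C}$ into $(\mathbb C^2)^{\otimes n}$. Then $\rho_C=|C|^{-1}VV^\dagger$, and its nonzero eigenvalues coincide, with multiplicities, with those of the Gram matrix $\Gamma=|C|^{-1}V^\dagger V$, whose entries are $\Gamma_{a,b}=|C|^{-1}\langle\phi_a|\phi_b\rangle$. Because the states are products, $\langle\phi_a|\phi_b\rangle=\prod_j\langle\phi_{a_j}|\phi_{b_j}\rangle=\gamma^{\wt(a+b)}$, using the single-qubit overlap $\frac{1+\gamma}{2}-\frac{1-\gamma}{2}=\gamma$. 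Hence $\Gamma_{a,b}=|C|^{-1}f(a+b)$ with $f(c)=\gamma^{\wt(c)}$, which is a convolution operator on the abelian group $C$. The characters $\chi_u(c)=(-1)^{u\cdot c}$ therefore diagonalize it. Parametrize $C$ as $c=G^{\mathsf T}x$ with $x\in\F^k$ (a bijection, by full row rank) and take characters $(-1)^{s\cdot x}$ for $s\in\F^k$. The eigenvalues are then $\mu(s)=2^{-k}\sum_{x\in\F^k}(-1)^{s\cdot x}\gamma^{\wt(G^{\mathsf T}x)}$, and there are exactly $2^k$ of them, counted with multiplicity.

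Next we identify $\mu(s)$ with $\Pr[GW=s]$. For a single coordinate, $\E[(-1)^{yW_j}]=1-2\cdot\frac{1-\gamma}{2}=\gamma^{y}$ for $y\in\F$. Independence then gives $\E[(-1)^{x\cdot GW}]=\E[(-1)^{(G^{\mathsf T}x)\cdot W}]=\gamma^{\wt(G^{\mathsf T}x)}$, which is the Fourier transform of the law of $GW$ on $\F^k$. Fourier inversion yields $\Pr[GW=s]=2^{-k}\sum_x(-1)^{s\cdot x}\E[(-1)^{x\cdot GW}]=\mu(s)$. The spectrum of $\Gamma$ is thus $\{\lambda_C(s)\}_{s\in\F^k}$, and these values are automatically nonnegative and sum to one.

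It remains to complete the spectrum of $\rho_C$ and read off the entropies. The operators $VV^\dagger$ and $V^\dagger V$ share their nonzero eigenvalues with multiplicity. Since $V^\dagger V$ is $2^k\times2^k$, the spectrum of $\rho_C$ on the $2^n$-dimensional space consists of the $2^k$ values $\lambda_C(s)$, some of which may vanish, together with $2^n-2^k$ further zeros. The entropies $S$ and $H_2$ depend only on the nonzero spectrum, so Eq.~\eqref{eq:spectrum-entropies} follows at once.

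The only delicate point is the bookkeeping of multiplicities when some $\lambda_C(s)=0$, for example when $\gamma=0$ or $\gamma=1$. The singular value decomposition of $V$ settles this: the full list for $V^\dagger V$ has length $2^k$ with zeros included, and $VV^\dagger$ has that same list padded by $2^n-2^k$ zeros. Everything else is routine Fourier analysis on $\F^k$.
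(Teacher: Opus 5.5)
Your proposal is correct and follows the paper's proof essentially step for step: you identify the nonzero spectrum of $\rho_C$ with that of the normalized Gram matrix $2^{-k}\gamma^{\wt(G^{\mathsf T}(x+y))}$, diagonalize it by the characters $(-1)^{s\cdot x}$, recognize the eigenvalues as $\Pr[GW=s]$ by Fourier inversion, and pad with $2^n-2^k$ zeros. One small slip in your last paragraph: at $\gamma=0$ no $\lambda_C(s)$ vanishes, since $GW$ is then uniform on $\F^k$, so only $\gamma=1$ produces zero eigenvalues among the $\lambda_C(s)$; your SVD bookkeeping covers every case regardless.
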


The proof is given in Sec.~\ref{sec:spectrum-proof}. This is a spectral identity, not a measurement that turns Eve's system into a classical variable. For the attack, we use $\gamma=\sqrt\alpha>0$ and obtain the following entropy bound.

\begin{lemma}[Entropy of a linear image]\label{lem:entropy}
Let $U$ be uniform on an affine space $u+D\subseteq\F^n$, let $X=FU$, and let $Q$ contain $\ket{\phi_U}$ from Eq.~\eqref{eq:product}. Put $D_0=D\cap\ker F$. Then
\begin{equation}
 H(X\mid Q)\le \log\frac{Z_D(\gamma^2)}{Z_{D_0}(\gamma^2)}.
 \label{eq:entropy-bound}
\end{equation}
\end{lemma}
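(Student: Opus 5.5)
The plan is to reduce the conditional von Neumann entropy to entropies of the average states $\rho_D$ and $\rho_{D_0}$, and then convert everything to collision entropies, which evaluate to the weight sums.

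\textbf{Step 1: reduce to $u=0$.} The single-bit states satisfy $Z\ket{\phi_a}=\ket{\phi_{a+1}}$, where $Z$ is the Pauli operator. Hence $\ket{\phi_{u+d}}=Z^{u}\ket{\phi_d}$, with $Z^u=\bigotimes_j Z^{u_j}$. This is a fixed unitary on $Q$, so it does not change conditional entropies. Also $X=Fu+Fd$ is a fixed relabeling of $X$. So we may assume $U$ is uniform on $D$.

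\textbf{Step 2: exact formula for $H(X\mid Q)$.} Write $k=\dim D$ and $k_0=\dim D_0$. The map $d\mapsto Fd$ sends $D$ onto $FD$, and each fiber is a coset $d+D_0$. Therefore $X$ is uniform on $2^{k-k_0}$ values, and $H(X)=k-k_0$.

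Conditioned on $X=Fd$, Eve's state is the uniform average of $\proj{\phi_a}$ over $a\in d+D_0$. By the same $Z^d$ symmetry, this state is unitarily equivalent to $\rho_{D_0}$. For the classical--quantum state,
\[
H(X\mid Q)=H(X)+\sum_x p(x)\,S(\rho_x)-S(\rho_Q)=(k-k_0)+S(\rho_{D_0})-S(\rho_D).
\]

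\textbf{Step 3: compute the collision entropies.} From $\langle\phi_0|\phi_1\rangle=\gamma$ we get $|\langle\phi_a|\phi_{a'}\rangle|^2=\gamma^{2\wt(a+a')}$. Summing over pairs in a subspace $C$ of dimension $\dim C$ gives
\[
\Tr\rho_C^2=|C|^{-1}Z_C(\gamma^2),
\qquad\text{so}\qquad
H_2(\rho_C)=\dim C-\log Z_C(\gamma^2).
\]
Hence the right-hand side of Eq.~\eqref{eq:entropy-bound} equals $(k-k_0)+H_2(\rho_{D_0})-H_2(\rho_D)$.

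Comparing with Step 2, the lemma is therefore equivalent to
\[
S(\rho_D)-H_2(\rho_D)\;\ge\;S(\rho_{D_0})-H_2(\rho_{D_0}).
\]
In words, the gap between Shannon and collision entropy does not increase when passing from $D$ to the subspace $D_0$.

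\textbf{Step 4: the classical Rényi comparison.} By Lemma~\ref{lem:spectrum}, both sides are classical quantities. Let $G$ generate $D$. Since $D_0\subseteq D$, a generator of $D_0$ can be written $G_0=RG$ for some full-row-rank $R$. With $Y=GW$, the required inequality becomes
\[
H(Y)-H_2(Y)\ge H(RY)-H_2(RY),
\]
where $W$ has i.i.d.\ Bernoulli$((1-\gamma)/2)$ coordinates.

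This is the main obstacle, since the gap $H-H_2$ is not monotone under arbitrary functions of a general distribution. The argument must use the structure of linear images of i.i.d.\ Bernoulli noise.

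My first attempt would exploit the Fourier structure. The characters of $Y$ satisfy $\E(-1)^{r^{\mathsf T}Y}=\gamma^{\wt(G^{\mathsf T}r)}$, so every probability and collision sum is a nonnegative combination of powers of $\gamma$. I would decompose $Y$ as $(RY,\,Y')$ and show that the conditional gap $H(Y\mid RY)-\bigl[H_2(Y)-H_2(RY)\bigr]$ is nonnegative. I would try this via a convexity or interpolation argument in the Rényi order, applied to the family $\gamma^{2\wt}$ restricted to cosets of $D_0$.

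If that fails, I would fall back on the paper's indicated route: invoke the known Rényi-entropy comparison for such linear images, which the paper proves in its appendix, as a black box.
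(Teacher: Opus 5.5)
Your proposal is correct and follows the paper's proof step for step. It uses the same $Z^u$ reduction, the same formula $H(X\mid Q)=(k-k_0)+S(\rho_{D_0})-S(\rho_D)$, the same purity computation $H_2(\rho_C)=\dim C-\log Z_C(\gamma^2)$, and the same reduction via Lemma~\ref{lem:spectrum} to the Shannon-versus-collision comparison for the nested linear images $GW$ and $G_0W$. Your Fourier/interpolation idea in Step~4 is not yet an argument, so the step is actually carried by your fallback: invoking the binary-linear R\'enyi comparison of Ref.~\cite[Theorem~1]{Vijay}, proved in Appendix~\ref{app:detailed-proof}, exactly as the paper does.
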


Here $D$ is the direction subspace of the allowed affine set $u+D$. Two allowed values give the same $X$ exactly when their difference lies in $D_0$; thus $X$ labels an element of $D/D_0$, up to a known relabeling.

\begin{proof}
Translation by $u$ is accompanied by the unitary $Z^u$ on $Q$, because $Z^u\ket{\phi_a}=\ket{\phi_{a+u}}$. It also relabels $X$ by $Fu$, so it suffices to take $u=0$. Let $r=\dim D-\dim D_0$. The $2^r$ output values are equiprobable. Their conditional states are unitary translates of $\rho_{D_0}$, giving
\begin{equation}
 H(X\mid Q)=r+S(\rho_{D_0})-S(\rho_D). \label{eq:quotient-entropy}
\end{equation}

Choose a generator matrix $G$ of $D$ whose last rows form a generator $G_0$ of $D_0$. By Lemma~\ref{lem:spectrum}, the spectra of $\rho_D$ and $\rho_{D_0}$ are represented by $GW$ and $G_0W$ for the same independent Bernoulli vector $W$. The latter is a coordinate projection of the former. The binary-linear R\'enyi comparison~\cite[Theorem~1]{Vijay}, proved directly in Appendix~\ref{app:detailed-proof}, yields
\begin{equation}
 S(\rho_D)-S(\rho_{D_0})
 \ge H_2(\rho_D)-H_2(\rho_{D_0}). \label{eq:renyi}
\end{equation}
The required comparison is $H(L_1W)-H(L_2W)\ge H_2(L_1W)-H_2(L_2W)$ for independent binary coordinates and nested linear images, $\ker L_1\subseteq\ker L_2$.

Since $\langle\phi_a|\phi_b\rangle=\gamma^{\wt(a+b)}$, each difference in $C$ occurs $|C|$ times in the purity calculation. Hence
\begin{equation}
 \Tr\rho_C^2=\frac{Z_C(\gamma^2)}{|C|},\qquad
 H_2(\rho_C)=\dim C-\log Z_C(\gamma^2). \label{eq:purity}
\end{equation}
Substituting Eqs.~\eqref{eq:renyi} and~\eqref{eq:purity} into Eq.~\eqref{eq:quotient-entropy} proves the bound.
\end{proof}

For classical--quantum states, $\pg(X\mid Q)=2^{-H_{\min}(X\mid Q)}$ and $H_{\min}(X\mid Q)\le H(X\mid Q)$~\cite{KRS}. Lemma~\ref{lem:entropy} therefore gives
\begin{equation}
 \pg(X\mid Q)\ge\frac{Z_{D_0}(\gamma^2)}{Z_D(\gamma^2)}. \label{eq:eve-bound}
\end{equation}

\subsection{Two comparisons of code weight sums}\label{sec:cosets}
\begin{lemma}[Coset and subspace comparisons]\label{lem:cosets}
Fix $0<q\le1$.
If $C_0\subseteq C$, then, for every $w\in\F^n$,
\begin{equation}
 \frac{Z_{w+C_0}(q)}{Z_{w+C}(q)}
 \le\frac{Z_{C_0}(q)}{Z_C(q)}. \label{eq:coset}
\end{equation}
If $D,H\subseteq C$ and $J=D\cap H$, then
\begin{equation}
 \frac{Z_J(q)}{Z_D(q)}\ge\frac{Z_H(q)}{Z_C(q)}. \label{eq:subspace}
\end{equation}
\end{lemma}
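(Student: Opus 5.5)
The plan is to prove the coset inequality~\eqref{eq:coset} by Fourier duality, reducing it to a Ginibre-type correlation inequality, and then to derive the subspace inequality~\eqref{eq:subspace} from it by a coset decomposition.

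\emph{Step 1: Fourier form of~\eqref{eq:coset}.} Write $q=p'/(1-p')$ with $p'\in(0,1/2]$, and let $N$ have independent Bernoulli$(p')$ coordinates. Then $Z_{w+C}(q)=(1-p')^{-n}\Pr[N\in w+C]$. Put $\theta=1-2p'\in[0,1)$. The Poisson summation formula gives
\begin{equation*}
 \Pr[N\in w+C]=\frac{1}{|C^\perp|}\,g_{C^\perp}(w),\qquad
 g_S(w):=\sum_{u\in S}(-1)^{u\cdot w}\theta^{\wt(u)} .
\end{equation*}
Since $C_0\subseteq C$ implies $S':=C^\perp\subseteq S:=C_0^\perp$, the constant factors cancel in~\eqref{eq:coset}. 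The inequality then becomes
\begin{equation*}
 g_S(w)\,g_{S'}(0)\le g_S(0)\,g_{S'}(w),
\end{equation*}
or equivalently
\begin{equation*}
 \sum_{u\in S,\,v\in S'}\theta^{\wt(u)+\wt(v)}\bigl[(-1)^{v\cdot w}-(-1)^{u\cdot w}\bigr]\ge0 .
\end{equation*}
Both $g_{S'}(w)$ and $g_S(w)$ are positive, because they are proportional to coset probabilities and $q>0$. So dividing is legitimate.

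\emph{Step 2: the correlation inequality.} This is Ginibre's inequality for a ``ferromagnetic'' $\F$-valued system. To prove it directly I would use Ginibre's duplication trick. Replace the pair $(u,v)$ by $(u+v,v)$, which is a bijection of $S\times S'$ because $S'\subseteq S$. Then use the coordinatewise factorization $\theta^{\wt(u)}=\prod_j\theta^{u_j}$ to write each weight as an expectation over independent $\pm1$ variables with nonnegative moments: $\theta^{u_j}=\E[\sigma_j^{u_j}]$ with $\E\sigma_j=\theta$. The bracket $(-1)^{v\cdot w}-(-1)^{u\cdot w}$ factors as $(-1)^{v\cdot w}\bigl(1-(-1)^{(u+v)\cdot w}\bigr)$. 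After summing over the subgroup structure, the double sum should become
\begin{equation*}
 \E\Bigl[\,\bigl|\textstyle\sum_{v\in S'}\cdots\bigr|^2\cdot\bigl(1-\chi\bigr)\Bigr]\ge0,
\end{equation*}
where $\chi=\pm1$ and the sum is a nonnegative combination of products of the $\sigma_j$.

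I expect this step to be the main obstacle: choosing the right duplicated-spin representation so that every term is manifestly nonnegative. As a fallback, I would try the alternative route of an induction on $\dim S-\dim S'$. That reduces to $S=S'\cup(y+S')$, where the claim reads
\begin{equation*}
 (-1)^{y\cdot w}\,g_{y+S'}(w)/g_{S'}(w)\le g_{y+S'}(0)/g_{S'}(0).
\end{equation*}
This is a one-character GKS-type bound on a conditional expectation.

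\emph{Step 3: deriving~\eqref{eq:subspace}.} Apply~\eqref{eq:coset} with $C_0=J\subseteq H$ and $w=d\in D$. This gives $Z_{d+H}\ge Z_{d+J}\,Z_H/Z_J$. As $d$ ranges over representatives of $D/J$, the cosets $d+H$ are distinct cosets of $H$ in $C$. Indeed, $d-d'\in H$ together with $d-d'\in D$ forces $d-d'\in J$. Since all terms are positive, we get
\begin{equation*}
 Z_C\ge\sum_{d\in D/J}Z_{d+H}\ge\frac{Z_H}{Z_J}\sum_{d\in D/J}Z_{d+J}=\frac{Z_H\,Z_D}{Z_J},
\end{equation*}
which rearranges to~\eqref{eq:subspace}.
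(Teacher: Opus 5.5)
\textbf{Gap in Step~2.} Steps~1 and~3 are sound. The Poisson-summation rewriting of \eqref{eq:coset} as $g_S(w)g_{S'}(0)\le g_S(0)g_{S'}(w)$, with $S'=C^\perp\subseteq S=C_0^\perp$, is correct. Your derivation of \eqref{eq:subspace} is also valid: you apply \eqref{eq:coset} to $J\subseteq H$ with $w=d\in D$ and sum over the disjoint cosets $d+H\subseteq C$. This differs slightly from the paper, which sums \eqref{eq:cr} over $D/J\times H/J$, and it shows that \eqref{eq:subspace} follows from \eqref{eq:coset} alone. The gap is Step~2, which carries the entire content of the lemma. You name Ginibre's inequality but never exhibit a model to which it provably applies. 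The direct argument stops at a placeholder (``should become $\E[|\sum\cdots|^2(1-\chi)]$''), and the index-two fallback is only restated, not proved. As written, neither \eqref{eq:coset} nor, consequently, \eqref{eq:subspace} is established.

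\textbf{How to close it.} After your substitution $(u,v)\mapsto(u+v,v)$ the bracket becomes $(-1)^{v\cdot w}\bigl(1-(-1)^{u\cdot w}\bigr)$. Coordinatewise, $\wt(u+v)+\wt(v)=\wt(u)+2\wt(\pi_uv)$, where $\pi_u$ restricts to the coordinates with $u_j=0$; this is the same identity as Eq.~\eqref{eq:coset-pair-weight}. Hence
\begin{equation*}
 g_S(0)g_{S'}(w)-g_S(w)g_{S'}(0)
 =\sum_{u\in S}\theta^{\wt(u)}\bigl(1-(-1)^{u\cdot w}\bigr)\Phi_u(w),
 \qquad
 \Phi_u(w):=\sum_{v\in S'}\theta^{2\wt(\pi_uv)}(-1)^{v\cdot w}.
\end{equation*}
What you need is $\Phi_u(w)\ge0$. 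This comes from a character sum over a subgroup, not from a squared modulus. By Poisson summation over $S'$,
\[
 \Phi_u(w)=2^{-n}|S'|\sum_{x\in S'^\perp}\hat f_u(w+x),
\]
where
\[
 \hat f_u(y)=\sum_{v\in\F^n}\theta^{2\wt(\pi_uv)}(-1)^{v\cdot y}=\prod_{j:u_j=0}\bigl(1+\theta^2(-1)^{y_j}\bigr)\prod_{j:u_j=1}\bigl(1+(-1)^{y_j}\bigr)\ge0.
\]
Equivalently, write $\theta^{2v_j}=\E\tau_j^{v_j}$ for $\pm1$-valued $\tau_j$ with mean $\theta^2$, and use that a character summed over a subgroup is nonnegative. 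Every term is then nonnegative and Step~2 closes.

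Alternatively, a properly set-up appeal to GKS~II works. In spins $\sigma_j=(-1)^{u_j}$, the weight $\theta^{\wt(u)}$ is an Ising measure with nonnegative fields ($\tanh h_j=q$). Passing from $S$ to $S'$ imposes the constraints $(-1)^{u\cdot y}=1$ for $y\in S'^\perp$, as limits of infinitely strong nonnegative multispin couplings. Hence $\langle(-1)^{u\cdot w}\rangle$ cannot decrease. Either way, you would prove only the nested-ratio form \eqref{eq:coset}, not the paper's stronger two-coset inequality \eqref{eq:cr}; together with your Step~3, that suffices for the lemma.
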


\begin{proof}
We use the following coset inequality of Cammarota and Russo~\cite[Proposition~3.1, Eq.~(3.1)]{CR}:
\begin{equation}
 Z_{u+G}(q)Z_{v+G}(q)\le Z_G(q)Z_{u+v+G}(q) \label{eq:cr}
\end{equation}
for any subspace $G$ and vectors $u,v$. In their notation, the Bernoulli product measure $\mu_x$ with common parameter $x=q/(1+q)\le1/2$ satisfies $\mu_x(w+G)=(1+q)^{-n}Z_{w+G}(q)$, so the normalization cancels from the two sides. Appendix~\ref{app:coset-proof} gives an elementary counting proof.

Set $G=C_0$, $u=w$, and sum over representatives $v$ of $C/C_0$. This gives $Z_{w+C_0}Z_C\le Z_{C_0}Z_{w+C}$, proving Eq.~\eqref{eq:coset}.

Next set $G=J$ and sum Eq.~\eqref{eq:cr} over representatives $u$ of $D/J$ and $v$ of $H/J$. Since $D\cap H=J$, the sums $u+v+J$ run once through $(D+H)/J$. Thus
\begin{equation}
 Z_DZ_H\le Z_JZ_{D+H}\le Z_JZ_C,
\end{equation}
which proves Eq.~\eqref{eq:subspace}.
\end{proof}

\subsection{Fixing the transcript}\label{sec:transcript}
Even when the choices are adaptive, fixing the complete transcript collects all disclosed syndromes into linear constraints on the original raw keys. This avoids tracking the distribution of surviving data through successive rounds.

\begin{proof}[Proof of Theorem~\ref{thm:main}]
Fix the public seeds and a positive-probability accepted transcript $t$. Stack all equations announced by Alice along this path into $M_AA=s_A$. Similarly write Bob's equations as $M_BB=s_B$. The set of input pairs generating this path is exactly
\begin{equation}
 A\in a_t+C_A,\qquad B\in b_t+C_B,\qquad
 C_A=\ker M_A,\quad C_B=\ker M_B. \label{eq:rectangle}
\end{equation}
Here $a_t,b_t$ are any solutions of the respective syndrome equations. Indeed, once the preceding messages agree with $t$, the next sender and linear map are fixed. Induction through the path proves both necessity and sufficiency of these equations. Public stopping and acceptance rules introduce no further private-input restrictions.

For $K=F_tA$, define
\begin{equation}
 D=C_A\cap C_B,\qquad H=C_A\cap\ker F_t,\qquad
 J=D\cap\ker F_t=D\cap H. \label{eq:spaces}
\end{equation}

\emph{Eve's recovery.} Eve has the error pattern $N$ in her classical register $E_1$. Condition additionally on $N=e$. She converts Bob's equations into $M_BA=s_B+M_Be$, obtaining further syndromes of Alice's key. Because $A$ is initially uniform and independent of $N$, it is now uniform on
\begin{equation}
 (a_t+C_A)\cap(b_t+e+C_B). \label{eq:intersection}
\end{equation}
Whenever this intersection is nonempty, it is a coset of $D$. Its offset is determined by $(e,t)$ and can be removed by a known relabeling and a unitary on $E_2$. Lemma~\ref{lem:entropy}, with $D_0=J$ and $\gamma^2=\alpha$, gives
\begin{equation}
 H(K\mid E_2,N=e,T=t)\le\log\frac{Z_D(\alpha)}{Z_J(\alpha)}.
\end{equation}
The bound is independent of $e$. Averaging over the flag, then using $H_{\min}\le H$, yields
\begin{equation}
 \pg(K\mid E,T=t)\ge\frac{Z_J(\alpha)}{Z_D(\alpha)}. \label{eq:eve-transcript}
\end{equation}
This argument does not assume that the error flag remains independent of the candidate after an interactive transcript.

\emph{Bob's recovery.} Fix a compatible value $B=b$. At fixed $(b,t)$, Bob's own messages add no constraint on $A$, so its allowed set is $a_t+C_A$. Initially, $A$ is uniform and independent of $N$; hence $N=b+A$ has conditional distribution
\begin{equation}
 \Pr[N=z\mid B=b,T=t]
 =\frac{\alpha^{\wt(z)}}{Z_{b+a_t+C_A}(\alpha)}
 \quad(z\in b+a_t+C_A). \label{eq:bob-posterior}
\end{equation}
Since $K=F_t(b+N)$, two allowed errors give the same key exactly when their difference lies in $H=C_A\cap\ker F_t$. Bob thus needs only the coset modulo $H$, whereas Eve's corresponding subspace is $J$. For each coset $w+H$ in this affine space, $w+C_A=b+a_t+C_A$. Summing Eq.~\eqref{eq:bob-posterior} over $w+H$ and applying Eq.~\eqref{eq:coset} gives
\begin{equation}
 \Pr[N\in w+H\mid B=b,T=t]
 =\frac{Z_{w+H}(\alpha)}{Z_{w+C_A}(\alpha)}
 \le\frac{Z_H(\alpha)}{Z_{C_A}(\alpha)}.
\end{equation}
A randomized decoder averages the success probabilities of its possible guesses and therefore cannot beat a most probable coset. An optimal decoder chooses such a coset. The bound is independent of $b$; averaging this maximum with weights $\Pr[B=b\mid T=t]$ therefore proves
\begin{equation}
 \pg(K\mid B,T=t)\le\frac{Z_H(\alpha)}{Z_{C_A}(\alpha)}. \label{eq:bob-transcript}
\end{equation}
This step uses only classical probabilities and requires no min-entropy characterization.

\emph{Comparison.} Apply Eq.~\eqref{eq:subspace} to $D,H\subseteq C_A$. Combining the bounds gives
\begin{equation}
 \pg(K\mid E,T=t)
 \ge\frac{Z_J}{Z_D}
 \ge\frac{Z_H}{Z_{C_A}}
 \ge\pg(K\mid B,T=t), \label{eq:chain}
\end{equation}
as required. All quantities in the middle are evaluated at $\alpha$. The physical attack was chosen before fixing the protocol or transcript, completing the proof.
\end{proof}

\section{Proofs of Lemmas~\ref{lem:attack} and~\ref{lem:spectrum}}\label{sec:constructions}
We now give the proofs of Lemmas~\ref{lem:attack} and~\ref{lem:spectrum}, which supply the physical realization of the attack and the spectral identity used in Sec.~\ref{sec:proof}.

\subsection{Physical realization of the attack}\label{sec:attack-realization}
\begin{proof}[Proof of Lemma~\ref{lem:attack}]
Let $X,Z$ denote the Pauli operators. Eve applies the isometry
\begin{equation}
 V_p\ket\psi=\sum_{e,z\in\F}\sqrt{q_{e,z}}\,
 X^eZ^z\ket\psi_B\ket{e,z}_{E_0},\qquad
 q_{0,0}=1-\frac{3p}{2},\quad q_{e,z}=\frac p2\ \ ((e,z)\ne(0,0)).
 \label{eq:isometry}
\end{equation}
The coefficients are nonnegative and sum to one for the range considered. This is a depolarizing Pauli channel: either of the two Pauli errors that anticommute with a given measurement basis flips its outcome, giving QBER $p$ in each of the three bases. The interaction does not require knowledge of that basis.

For matched-$Z$ signals, conditional on Alice's bit $a$ and Bob's bit $a+e$, Eve's normalized states are
\begin{align}
 \ket{\chi_a^{(0)}}&=
 \sqrt{\frac{1-3p/2}{1-p}}\ket{0,0}
 +(-1)^a\sqrt{\frac{p/2}{1-p}}\ket{0,1},\\
 \ket{\chi_a^{(1)}}&=\frac{\ket{1,0}+(-1)^a\ket{1,1}}{\sqrt2}.
\end{align}
The two error subspaces are orthogonal, so Eve can record $e$ without disturbing either pair. Within each pair the overlaps are respectively $\beta=(1-2p)/(1-p)$ and zero.

For completeness, two pure states of real overlap $b\in[0,\gamma]$ can be mapped by a quantum channel to two pure states of overlap $\gamma>0$. Choose environment states $\ket{\eta_0},\ket{\eta_1}$ with overlap $b/\gamma$. The mapping
\begin{equation}
 \ket{\chi_a}\longmapsto\ket{\phi_a}\ket{\eta_a}
\end{equation}
preserves the Gram matrix and hence defines an isometry on the input span. Tracing out the environment implements the desired channel. Equation~\eqref{eq:overlap} allows this construction for the no-error pair; the orthogonal error pair allows it as well. Eve uses these channels conditioned on $e$, retains the error flag, and discards their environments. The result is Eq.~\eqref{eq:attack}. Repeating the same interaction independently gives a collective attack for every raw length. The local degradation leaves all three observed QBERs unchanged.
\end{proof}

\subsection{Classical representation of the spectrum}\label{sec:spectrum-proof}
\begin{proof}[Proof of Lemma~\ref{lem:spectrum}]
The states in Eq.~\eqref{eq:average-state} are indexed by $a=G^{\mathsf T}x$, $x\in\F^k$. The density operator $\rho_C$ and its normalized Gram matrix have the same nonzero eigenvalues, including multiplicities. The latter matrix has entries
\begin{equation}
 \mathcal G_C(x,y)=2^{-k}\gamma^{\wt(G^{\mathsf T}(x+y))}. \label{eq:gram}
\end{equation}
The binary characters $2^{-k/2}(-1)^{s\cdot x}$ diagonalize this matrix. The eigenvalue indexed by $s$ is
\begin{equation}
 \lambda_C(s)=2^{-k}\sum_{x\in\F^k}(-1)^{s\cdot x}
 \gamma^{\wt(G^{\mathsf T}x)}. \label{eq:eigenvalues}
\end{equation}
If $W_j$ are independent Bernoulli$((1-\gamma)/2)$ variables, then
\begin{equation}
 \mathbb E(-1)^{x\cdot GW}
 =\prod_{j=1}^n\mathbb E(-1)^{(G^{\mathsf T}x)_jW_j}
 =\gamma^{\wt(G^{\mathsf T}x)}.
\end{equation}
Fourier inversion identifies Eq.~\eqref{eq:eigenvalues} with $\Pr[GW=s]$. The Gram matrix has dimension $2^k$, whereas $\rho_C$ acts on a space of dimension $2^n$; adding $2^n-2^k$ zero eigenvalues therefore gives the full spectrum of $\rho_C$. Shannon and collision entropies are unaffected by these zeros, proving Eq.~\eqref{eq:spectrum-entropies}.
\end{proof}

\section{Conclusion}\label{sec:conclusion}
The finite-length comparison identifies a limitation of adaptive linear processing at the known threshold. Its main ingredients are a deliberately weakened quantum attack and the geometry of the full public transcript. The information-theoretic comparison, rather than any restriction on decoding efficiency, prevents reliable linear key agreement from being secret. Extending the comparison beyond linear public communication and linear candidates remains a question for future work.

\section*{Acknowledgments}
OpenAI's ChatGPT was used under the author's direction to assist with mathematical analysis, checking derivations, manuscript organization, and drafting in English and Japanese. The author guided this assistance through iterative technical questions, examined intermediate arguments, and revised the manuscript.

\section*{Data availability}
This study is analytical. The derivations use no external datasets.

\appendix
\clearpage
\section{An elementary R\'enyi comparison via conditional collision probabilities}
\label{app:detailed-proof}

Equation~\eqref{eq:renyi} follows from Ref.~\cite[Theorem~1]{Vijay}.
For readers wishing to verify the comparison independently, this appendix gives
a self-contained proof using averaged conditional collision probabilities.
The goal is to show that, for two nested linear images of independent Bernoulli
variables, the difference of their Shannon entropies is at least the difference
of their collision entropies.

The proof has two comparisons. First, we bound the averaged conditional collision
probability from above by replacing one bit at a time with either a bit fixed
at zero or a uniform bit. Second, we compare the resulting bound with a ratio
of collision probabilities for two independent copies. In the one-bit calculation,
we first treat the case where the distribution of the conditioning variable
does not change; when that distribution changes as well, we compare the
denominators. We also prove the positive-correlation inequality used in the
second comparison.

As in the main text, all logarithms are base two, and all vectors, matrices,
and additions are over $\F$. Thus addition of bits means exclusive OR.
The variable $T$ in this appendix is an auxiliary linear image, not the public
transcript of the protocol. Likewise, the probabilities $p_j$ below describe
auxiliary Bernoulli variables, not the QBER. Their identification with the
main-text notation is given in Appendix~\ref{collision:sec:application}.

\subsection{The quantities to be compared}
\label{collision:sec:goal}

\subsubsection{Setting and goal}

Let $Z=(Z_1,\ldots,Z_n)^{\mathsf T}$ have independent components with
$\Prob(Z_j=1)=p_j$. The parameters $p_j\in[0,1]$ need not be equal.
Let binary matrices $L_1,L_2$ satisfy
\[
 \ker L_1\subseteq\ker L_2,
\]
and put
\begin{equation}
 T=L_1Z,\qquad V=L_2Z.
 \label{collision:eq:TV}
\end{equation}
The kernel inclusion means that knowing $T$ determines $V$.
Indeed, if $L_1z=L_1z'$, then
$z+z'\in\ker L_1\subseteq\ker L_2$, and hence $L_2z=L_2z'$.
This correspondence is linear, so for a suitable matrix $M$ we have
\begin{equation}
 V=MT,\qquad L_2=ML_1.
 \label{collision:eq:M}
\end{equation}
We view $T$ as the finer variable and $V$ as the variable obtained by discarding
some of its information.

Write the distributions and collision entropy as
\[
 P(t)=\Prob(T=t),\quad Q(v)=\Prob(V=v),\qquad
 H_2(X)=-\log_2\sum_x\Prob(X=x)^2.
\]
Our goal is
\begin{equation}
 \boxed{H(T\mid V)=H(T)-H(V)\ge H_2(T)-H_2(V).}
 \label{collision:eq:goal}
\end{equation}
Both sides are \emph{differences} of entropies. Consequently, applying
$H(X)\ge H_2(X)$ separately to $T$ and $V$ does not establish
Eq.~\eqref{collision:eq:goal}.

\subsubsection{An intermediate averaged conditional collision probability}

For $Q(v)>0$, define
\[
 c_v:=\sum_{t:Mt=v}\Prob(T=t\mid V=v)^2.
\]
For $Q(v)=0$, set $c_v=0$ by convention.
This is the probability that two independent draws of $T$ from its conditional
distribution given a positive-probability value $V=v$ coincide. Average it using the actual probability
$Q(v)$ of the conditioning value:
\begin{equation}
 \mathcal C:=\sum_{v:Q(v)>0}Q(v)c_v
 =\sum_{t:P(t)>0}\frac{P(t)^2}{Q(Mt)}.
 \label{collision:eq:C}
\end{equation}
No denominator in this sum is zero, because $P(t)>0$ implies $Q(Mt)>0$.

The quantity corresponding to the right-hand side of our goal is instead
\begin{equation}
 \mathcal J:=\frac{\sum_tP(t)^2}{\sum_vQ(v)^2}
 =2^{-[H_2(T)-H_2(V)]}
 =\frac{\sum_vQ(v)^2c_v}{\sum_vQ(v)^2}.
 \label{collision:eq:J}
\end{equation}
These two quantities average the same $c_v$, but $\mathcal C$ uses weights
$Q(v)$, whereas $\mathcal J$ uses normalized weights $Q(v)^2$.
Because of this difference, $\mathcal C\le\mathcal J$ need not hold for
general distributions. The substance of the proof is to establish this
direction of comparison using independence and linearity.

\begin{theorem}[A comparison implying Eq.~\eqref{eq:renyi}]
\label{collision:thm:main}
In the above setting,
\begin{equation}
 \boxed{H(T\mid V)\ge-\log_2\mathcal C
 \ge H_2(T)-H_2(V).}
 \label{collision:eq:strong}
\end{equation}
In particular, Eq.~\eqref{collision:eq:goal} holds.
\end{theorem}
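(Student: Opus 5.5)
The first inequality in Eq.~\eqref{collision:eq:strong} is elementary, and I would dispose of it first. For each $v$ with $Q(v)>0$, the conditional Shannon entropy dominates the conditional collision entropy, $H(T\mid V=v)\ge-\log c_v$. Averaging with weights $Q(v)$ and applying Jensen's inequality to the convex function $-\log$ gives
\[
H(T\mid V)=\sum_vQ(v)H(T\mid V=v)\ge\sum_vQ(v)(-\log c_v)\ge-\log\sum_vQ(v)c_v=-\log\mathcal C .
\]
The substance is therefore the second inequality $\mathcal C\le\mathcal J$. By Eq.~\eqref{collision:eq:J}, this says that $c_v$ and $Q(v)$ are positively correlated under the weights $Q$. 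That fails for general distributions, so independence and linearity must be used.

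The plan for $\mathcal C\le\mathcal J$ is to decompose each Bernoulli bit. First, I reduce to $p_j\le1/2$: replacing $Z_j$ by $Z_j+1$ only shifts $T$ and $V$ by known vectors, which leaves $\mathcal C$ and $\mathcal J$ unchanged. Then $\Bern(p_j)$ is a mixture: with probability $1-2p_j$ the bit is fixed at $0$, and with probability $2p_j$ it is uniform. Let $S$ be the random set of coordinates made uniform. I aim to prove the one-bit step: replacing a single $Z_j$ by this mixture, $\mathcal C$ is at most the corresponding mixture of the two $\mathcal C$-values (bit fixed at zero, bit uniform), with the other bits held fixed. Iterating over $j$ gives $\mathcal C\le\E_S[\mathcal C_S]$. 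At the extreme points, $T$ is uniform on the subspace $L_1\F^S$ (shifted) and $V$ is its image under $M$, so
\[
\mathcal C_S=\frac{|L_2\F^S|}{|L_1\F^S|}=2^{-\dim(L_1\F^S)+\dim(L_2\F^S)} .
\]
For the one-bit step, write $T=T'+\ell Z_j$ with column $\ell$, and split into the cases $M\ell=0$ and $M\ell\neq0$. In the first case the distribution $Q$ of the conditioning variable is unchanged, and I expect convexity of $P\mapsto P(t)^2$ to give the bound directly. In the second case the denominators $Q(Mt)$ also mix. Here I would use joint convexity of $(x,y)\mapsto x^2/y$ on $y>0$, grouping $t$ with $t+\ell$ and $v$ with $v+M\ell$.

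Next I would express $\mathcal J$ in the same random-subset language. We have $\sum_tP(t)^2=\Prob(L_1(Z+Z')=0)$, where $Z'$ is an independent copy, and $Z+Z'$ again has independent Bernoulli coordinates. Applying the same zero/uniform decomposition to $Z+Z'$, or conditioning directly on $S$ via Fourier averaging over characters, I expect
\[
\Prob(L_1Y=0)=\E_{S'}\bigl[2^{-\dim L_1\F^{S'}}\bigr]
\]
for a suitable product random set $S'$, and the analogous formula for $L_2$. The target then becomes an inequality of the form $\E[2^{-r_1+r_2}]\le\E[2^{-r_1}]/\E[2^{-r_2}]$, where $r_i=\dim L_i\F^S$. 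Equivalently, $\E[2^{-r_1+r_2}]\,\E[2^{-r_2}]\le\E[2^{-r_1}]$. I would prove this as a positive-correlation (Harris/FKG) statement on the product measure over subsets. The key inputs are that $r_1-r_2=\dim(\ker M\cap L_1\F^S)$ is increasing in $S$, together with the submodularity of the rank functions.

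I expect the main obstacle to be matching the mixing parameters between the two sides. $\mathcal C$ naturally decomposes with the parameters $p_j$, whereas $\mathcal J$ involves the two-copy parameters $2p_j(1-p_j)$. So I anticipate an extra comparison step showing that the $\mathcal C$-side bound may be evaluated at the same subset distribution as the $\mathcal J$-side. My first attempt would be to bound $\mathcal C$ via the two-copy representation $\sum_tP(t)^2/Q(Mt)$, writing $P(t)^2$ as a probability for $(Z,Z')$. If that fails, I would fall back on the positive-correlation inequality to absorb the monotone discrepancy between the two subset laws.
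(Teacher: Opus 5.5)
\paragraph{There is a genuine gap in the one-bit step for $\mathcal C$.} Your first inequality is fine. So is your two-copy representation of $\mathcal J$ over a product random set, and so is the Harris step for the decreasing functions $2^{-r_2(S)}$ and $2^{-\kappa(S)}$; monotonicity suffices there, and submodularity is not needed.

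Writing $\Bern(p_j)$ as a genuine mixture with weights $1-2p_j$ and $2p_j$, and using joint convexity of $x^2/y$, is valid. But it only gives $\mathcal C\le\E\,2^{-\kappa(S)}$ for a set $S$ with inclusion probabilities $2p_j$. The two-copy representation of $\mathcal J$ is governed by inclusion probabilities $\theta_j=4p_j(1-p_j)$. For $p_j\le1/2$ one has $2p_j\le\theta_j$, and $2^{-\kappa(S)}$ is decreasing in $S$. So your intermediate bound is at least $\E_\mu2^{-\kappa(S)}$, not at most. The discrepancy between the two subset laws has the wrong sign, and no positive-correlation argument can absorb it.

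Your bound can in fact exceed $\mathcal J$. Take the two-bit example ($p_1=p_2=1/4$, $T=(Z_1,Z_2)$, $V=Z_1+Z_2$). Inclusion probability $1/2$ gives $\E\,2^{-\kappa(S)}=7/8$, whereas $\mathcal J=25/34$. So the chain cannot be closed along your main route.

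\paragraph{The missing ingredient is a sharper one-bit bound.} You need $\mathcal C(p)\le(1-\theta)\mathcal C(0)+\theta\,\mathcal C(1/2)$ with $\theta=4p(1-p)\ge2p$. This puts more weight on the smaller endpoint value ($\mathcal C(1/2)\le\mathcal C(0)$) than convexity permits. It comes from the quadratic dependence on $p$, not from convexity, and is proved pair by pair: pair $t$ with $t+L_1e_j$.
\begin{itemize}
\item If flipping $Z_j$ leaves $V$ unchanged, expanding $P_p(t)^2+P_p(t+L_1e_j)^2$ gives the identity with weight $\theta$ exactly.
\item If the two values of $V$ are exchanged, apply the identity $a^2/c+b^2/d=(a+b)^2/(c+d)+(ad-bc)^2/[cd(c+d)]$ to the mixed probabilities. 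This isolates the $p$-dependence as $(1-2p)^2(ad-bc)^2$ divided by $(c+d)[(1-p)c+pd][(1-p)d+pc]$. The bracketed product equals $cd+p(1-p)(c-d)^2\ge cd$, which yields the factor $(1-2p)^2=1-\theta$.
\end{itemize}
Iterating this bound over the bits produces exactly the subset law of the two-copy representation. Your Harris argument then completes the proof. Your suggestion to represent $P(t)^2$ through two copies may point in this direction, but as stated it is not yet an argument.
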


The first inequality holds for general distributions and can be checked
immediately. For a realized pair $(T,V)=(t,v)$, put
$A(t,v)=\Prob(T=t\mid V=v)$. Then
\[
 H(T\mid V)=\E[-\log_2 A(T,V)],\qquad
 \E A(T,V)=\mathcal C.
\]
Convexity of $-\log_2 x$ and Jensen's inequality give
\begin{equation}
 H(T\mid V)\ge-\log_2\mathcal C,
 \qquad\text{or equivalently}\qquad 2^{-H(T\mid V)}\le\mathcal C.
 \label{collision:eq:jensen}
\end{equation}
It remains to prove $\mathcal C\le\mathcal J$.

\subsection{An intermediate quantity: retaining only uniform coordinates}
\label{collision:sec:rank}

For each coordinate, put
\begin{equation}
 \theta_j:=4p_j(1-p_j),
 \label{collision:eq:theta}
\end{equation}
and consider a random subset $S\subseteq[n]$ that includes each $j$
independently with probability $\theta_j$, where $[n]=\{1,\ldots,n\}$.
Write its distribution as
\begin{equation}
 \mu(S)=\prod_{j\in S}\theta_j\prod_{j\notin S}(1-\theta_j).
 \label{collision:eq:mu}
\end{equation}
The set $S$ specifies which coordinates will be made uniform. At this stage,
we are not constructing a random variable with the same distribution as $Z$.

Define the coordinate subspace
\[
 \F^S=\{z\in\F^n:z_j=0\text{ for }j\notin S\}.
\]
Let $L_{i,S}$ be the matrix consisting of the columns of $L_i$ indexed by $S$.
We use the following difference of dimensions:
\begin{equation}
 r_i(S):=\rank L_{i,S},\qquad
 \kappa(S):=r_1(S)-r_2(S).
 \label{collision:eq:kappa}
\end{equation}

\begin{lemma}[Remaining degrees of freedom for uniform input]
\label{collision:lem:rank}
If the input is uniform on $\F^S$, then $T$ is uniform on $2^{r_1(S)}$ values,
and $V$ is uniform on $2^{r_2(S)}$ values. Moreover, for every $v$ of positive
probability, $T\mid V=v$ is uniform on $2^{\kappa(S)}$ values.
Writing $\mathcal C_S$ for $\mathcal C$ with this uniform input, we have
\begin{equation}
 \mathcal C_S=2^{-\kappa(S)}.
 \label{collision:eq:uniformC}
\end{equation}
Furthermore, $S\subseteq S'$ implies $\kappa(S)\le\kappa(S')$.
\end{lemma}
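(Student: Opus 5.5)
The plan is to treat everything as elementary linear algebra over $\F$. If $Z$ is uniform on $\F^S$, then $T=L_1Z=L_{1,S}Z_S$, where $Z_S$ is uniform on $\F^{|S|}$. The image of a uniform vector under a linear map is uniform on the column space, with each fibre a coset of $\ker L_{1,S}$ of equal size. So $T$ is uniform on $\operatorname{im}L_{1,S}$, which has $2^{r_1(S)}$ elements, and likewise $V$ is uniform on $2^{r_2(S)}$ values.

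For the conditional statement, we use $V=MT$ from Eq.~\eqref{collision:eq:M}, which restricts to $L_{2,S}=ML_{1,S}$. Then $M$ maps $\operatorname{im}L_{1,S}$ onto $\operatorname{im}L_{2,S}$, and each fibre of this restricted map is a coset of $\ker M\cap\operatorname{im}L_{1,S}$. By rank--nullity, this kernel has dimension $r_1(S)-r_2(S)=\kappa(S)$. Since $T$ is uniform on $\operatorname{im}L_{1,S}$, conditioning on $V=v$ of positive probability leaves $T$ uniform on a fibre of $2^{\kappa(S)}$ points. Hence $c_v=2^{-\kappa(S)}$ for every such $v$, and averaging in Eq.~\eqref{collision:eq:C} gives $\mathcal C_S=2^{-\kappa(S)}$.

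For monotonicity, write $S'=S\cup R$ and reduce to adding one column $j\notin S$; induction then handles the general case. Adding column $j$ raises $r_1$ by $1$ exactly when the $j$th column of $L_1$ lies outside $\operatorname{im}L_{1,S}$, and otherwise leaves it unchanged; the same holds for $r_2$ with $L_2$. The key observation is that the $j$th column of $L_2$ equals $M$ times the $j$th column of $L_1$. If $(L_1)_j\in\operatorname{im}L_{1,S}$, then $(L_2)_j=M(L_1)_j\in M\operatorname{im}L_{1,S}=\operatorname{im}L_{2,S}$. Thus $r_2$ can only increase when $r_1$ does, so the increment of $\kappa$ is $0$ or $1$, never $-1$.

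An alternative argument is that $\kappa(S)=\dim(\ker M\cap\operatorname{im}L_{1,S})$, and this space grows with $\operatorname{im}L_{1,S}$. This is even cleaner, and I would present it as the main argument. I expect no real obstacle. The only points needing care are checking that $M$ restricted to $\operatorname{im}L_{1,S}$ has image exactly $\operatorname{im}L_{2,S}$, which follows from $L_{2,S}=ML_{1,S}$, and handling values $v$ of zero probability via the convention $c_v=0$.
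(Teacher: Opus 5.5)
Your proof is correct and follows the paper's argument: linear images of uniform inputs are uniform, $M$ maps $L_1(\F^S)$ onto $L_2(\F^S)$ with kernel of dimension $\kappa(S)$, which gives $c_v=2^{-\kappa(S)}$, and monotonicity follows because $\ker M\cap\operatorname{im}L_{1,S}$ is nested in $S$. That space is exactly the paper's $L_1(\F^S\cap\ker L_2)$, so your column-by-column increment argument is a valid extra check but is not needed.
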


\begin{proof}
Each point in the image of a linear map has as many preimages as there are
elements in its kernel. Thus the image of a uniform input is uniform.
Write $\mathcal T_S=L_1(\F^S)$ and $\mathcal V_S=L_2(\F^S)$ for the image
spaces of $T$ and $V$. The map $M:\mathcal T_S\to\mathcal V_S$ is surjective,
so its kernel has dimension $r_1(S)-r_2(S)=\kappa(S)$.
The preimage of each $v$ is a translate of this kernel: it consists of one
preimage $t_v$ plus all kernel elements. It therefore contains
$2^{\kappa(S)}$ points, and the conditional distribution is uniform on them.

For monotonicity, rather than directly comparing two differences of ranks,
we use
\begin{equation}
 \kappa(S)=\dim L_1(\F^S\cap\ker L_2).
 \label{collision:eq:kappa-image}
\end{equation}
The space on the right consists of all changes in $T$ obtained by varying
only the input coordinates in $S$ while leaving $V$ unchanged. It is exactly
the kernel of $M|_{\mathcal T_S}$. If $S\subseteq S'$, then
$\F^S\cap\ker L_2\subseteq\F^{S'}\cap\ker L_2$, so the corresponding
images are also nested. Taking dimensions proves the claim.
\end{proof}

The full proof establishes the chain
\begin{equation}
 \boxed{2^{-H(T\mid V)}\le\mathcal C
 \le\E_\mu 2^{-\kappa(S)}\le\mathcal J.}
 \label{collision:eq:roadmap}
\end{equation}
The first inequality is Eq.~\eqref{collision:eq:jensen}. We prove the remaining
two in the following order.
\begin{enumerate}
\item Appendix~\ref{collision:sec:onebit} establishes a one-bit bound.
Iterating it over all bits in Appendix~\ref{collision:sec:iteration} gives
$\mathcal C\le\E_\mu2^{-\kappa(S)}$.
\item Appendix~\ref{collision:sec:comparison} proves
$\E_\mu2^{-\kappa(S)}\le\mathcal J$ in two stages: first we obtain a
representation of $\mathcal J$, and then we use it to compare the two averages.
\end{enumerate}
Finally, Appendix~\ref{collision:sec:completion} combines the inequalities to
complete the proof of Theorem~\ref{collision:thm:main}.

\subsection{The one-bit bound: starting with an unchanged conditioning distribution}
\label{collision:sec:onebit}

This subsection proves the one-bit bound needed for the middle inequality
in Eq.~\eqref{collision:eq:roadmap}. We will iterate it over all bits in the
next subsection. The quantity to be bounded remains $\mathcal C$ from
Eq.~\eqref{collision:eq:C}, and the random variables remain $T=L_1Z$ and
$V=L_2Z$.

Fix the distribution of every bit except the $j$th, as well as $L_1,L_2$,
and vary only $\Prob(Z_j=1)=p$. Denote the resulting distributions of $T,V$
by $P_p(t),Q_p(v)$, and put
\begin{equation}
 \mathcal C(p):=\sum_{t\in\operatorname{im}L_1}
                  \frac{P_p(t)^2}{Q_p(Mt)}.
 \label{collision:eq:Cp}
\end{equation}
Since $P_p(t)\le Q_p(Mt)$, a zero denominator also has a zero numerator;
we assign that summand the value zero. This convention lets us use the same
summation set for every $p$.
The distributions $P_0,Q_0$ are those obtained by \emph{replacing} the $j$th
bit with zero. If the original $p_j<1$, independence implies that this is
the same as conditioning the original distribution on $Z_j=0$.
For $p_j=1$, that conditioning is undefined, but replacing the $j$th bit
with zero remains well-defined. Similarly, $\mathcal C(0)$ and
$\mathcal C(1/2)$ keep the other bits unchanged and replace only the $j$th
bit with a fixed zero or a uniform bit, respectively.

\begin{lemma}[One-bit endpoint bound]\label{collision:lem:onebit}
If the $j$th bit is independent of all the other bits, then
\begin{equation}
 \boxed{\mathcal C(p)\le(1-\theta)\mathcal C(0)
             +\theta\mathcal C(1/2),\qquad\theta=4p(1-p).}
 \label{collision:eq:onebit}
\end{equation}
The other bits may be mutually correlated in this lemma.
\end{lemma}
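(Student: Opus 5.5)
The plan is to reduce the statement to a single pair of summands and then use one convexity fact. Let $\ell_1$ be the $j$th column of $L_1$ and put $\ell_2=M\ell_1$, which is the $j$th column of $L_2$. Because $Z_j$ is independent of the other bits, conditioning on $Z_j$ gives
\[
 P_p(t)=(1-p)P_0(t)+pP_0(t+\ell_1),\qquad Q_p(v)=(1-p)Q_0(v)+pQ_0(v+\ell_2).
\]
No assumption on the joint law of the other bits is used beyond the fact that $P_0,Q_0$ are fixed. If $\ell_1=0$, then $P_p,Q_p$ do not depend on $p$, so $\mathcal C(p)=\mathcal C(0)=\mathcal C(1/2)$ and the claim is trivial. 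Otherwise we partition $\operatorname{im}L_1$ into pairs $\{t,t+\ell_1\}$, whose images are $v=Mt$ and $v+\ell_2$. We then prove the inequality separately for the contribution of each pair to Eq.~\eqref{collision:eq:Cp}. This is legitimate because all three quantities $\mathcal C(p)$, $\mathcal C(0)$ and $\mathcal C(1/2)$ split over the same pairs.

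Fix one pair and set $x=P_0(t)$, $y=P_0(t+\ell_1)$, $a=Q_0(v)$, $b=Q_0(v+\ell_2)$. Put
\[
 s=\tfrac{x+y}2,\quad d=\tfrac{x-y}2,\quad m=\tfrac{a+b}2,\quad \delta=\tfrac{a-b}2,\quad u=1-2p.
\]
Then the pair contributes
\[
 g(u)=\frac{(s+ud)^2}{m+u\delta}+\frac{(s-ud)^2}{m-u\delta}.
\]
The case $\ell_2=0$, in which the conditioning distribution is unchanged, is covered by setting $b=a$, so $\delta=0$; there $g$ is affine in $u^2$ and equality holds. Since $\theta=4p(1-p)=1-u^2$, the endpoints are $u=1$ (that is, $p=0$) and $u=0$ (that is, $p=1/2$). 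The target inequality therefore becomes
\[
 g(u)\le u^2g(1)+(1-u^2)g(0).
\]
Putting the two fractions over a common denominator gives
\[
 g=\frac{2\bigl[ms^2+w(md^2-2sd\delta)\bigr]}{m^2-w\delta^2},\qquad w=u^2\in[0,1].
\]
It then suffices to show that $w\mapsto g$ is convex on $[0,1]$, because a convex function lies below its chord between $w=0$ and $w=1$.

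For $f(w)=(A+Bw)/(c-kw)$ with $c-kw>0$ and $k\ge0$, we have
\[
 f''(w)=\frac{2k(Bc+Ak)}{(c-kw)^3}.
\]
Here $A=ms^2$, $B=md^2-2sd\delta$, $c=m^2$ and $k=\delta^2$, so
\[
 Bc+Ak=m\bigl(m^2d^2-2msd\delta+s^2\delta^2\bigr)=m(md-s\delta)^2\ge0.
\]
This gives the required convexity and proves the bound pairwise; summing over pairs yields Eq.~\eqref{collision:eq:onebit}.

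The main obstacle I expect is the boundary cases. One is a vanishing denominator: this occurs when $|\delta|=m$ at $w=1$, or when one of $a,b$ is zero. The other is the zero-summand convention for Eq.~\eqref{collision:eq:Cp}. I would handle both by restricting to $w\in[0,1)$ and passing to the limit $w\to1$. Since $P_p(t)\le Q_p(Mt)$, each summand stays bounded, so it extends continuously to $w=1$, and the convention at zero denominators matches that limit. If $m=0$, the whole pair contributes zero and there is nothing to prove. The naive alternative, using joint convexity of $x^2/a$ in $p$, only gives weights $(1-2p,\,2p)$ on $\bigl(\mathcal C(0),\mathcal C(1/2)\bigr)$. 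Since $\mathcal C(1/2)\le\mathcal C(0)$ and $1-2p\ge 1-\theta$, that bound is weaker than the target. This is why the convexity must be established in the variable $u^2$ rather than in $p$.
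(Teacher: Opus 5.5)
Your proposal is correct and is essentially the paper's proof. It uses the same pairing of $\operatorname{im}L_1$ into $\{t,t+\ell_1\}$, the same mixture formulas, and equality when $\ell_2=0$. Your per-pair bound also rests on the same nonnegative square: your $(md-s\delta)^2$ is one quarter of the paper's $(ad-bc)^2$.

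The only variation is how you finish each pair. You prove convexity of the pair contribution in $w=(1-2p)^2$ and treat vanishing denominators by continuity. The paper instead writes the contribution as $F(1/2)$ plus $(1-2p)^2$ times a term whose denominator $[(1-p)c+pd][(1-p)d+pc]=cd+p(1-p)(c-d)^2$ is smallest at $p=0$. That gives the chord bound directly without full convexity, and the paper checks the zero-denominator pairs by hand.
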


\begin{proof}
\subsubsection*{Preparation: working with the original probability distributions}

Let $e_j$ be the standard basis vector whose only nonzero entry is its $j$th,
and put
\[
 \delta=L_1e_j,\qquad\gamma=L_2e_j=M\delta.
\]
Here $\gamma$ is a temporary vector notation used only in this one-bit proof;
it is unrelated to the scalar overlap in the main text.
These vectors are the changes in $T,V$ when $Z_j$ changes from zero to one.
Independence of the $j$th bit from the other bits gives
\begin{align}
 P_p(t)&=(1-p)P_0(t)+pP_0(t+\delta),\notag\\
 Q_p(v)&=(1-p)Q_0(v)+pQ_0(v+\gamma).
 \label{collision:eq:shifts}
\end{align}
Thus the contribution from $Z_j=1$ is the original distribution shifted by
$\delta$ or $\gamma$, respectively. If $\delta=0$, then also $\gamma=0$,
both distributions remain unchanged, and the lemma holds with equality.
Henceforth assume $\delta\ne0$.

Partition $\operatorname{im}L_1$, including points of zero probability, into
pairs $\{t,t+\delta\}$. For one such pair, write $v=Mt$; the value of $V$
corresponding to the other point is $M(t+\delta)=v+\gamma$.
Fix this pair and define its two-point contribution to $\mathcal C(p)$ by
\begin{equation}
 F(p):=\frac{P_p(t)^2}{Q_p(v)}+
       \frac{P_p(t+\delta)^2}{Q_p(v+\gamma)}.
 \label{collision:eq:pair-definition}
\end{equation}
The quantity $F(p)$ refers to this fixed pair in both cases below.
Summing each pair's $F(p)$ exactly once gives $\mathcal C(p)$.

\subsubsection*{Case 1: the distribution of $V$ is unchanged ($\gamma=0$)}

Here $Q_p=Q_0$, and both denominators in
Eq.~\eqref{collision:eq:pair-definition} are $Q_0(v)$.
For a pair with $Q_0(v)>0$, substituting Eq.~\eqref{collision:eq:shifts}
and expanding the squares gives
\begin{align}
 F(p)&=\frac{P_p(t)^2+P_p(t+\delta)^2}{Q_0(v)}\notag\\
 &=(1-\theta)\frac{P_0(t)^2+P_0(t+\delta)^2}{Q_0(v)}\notag\\
 &\quad+\theta\frac{[P_0(t)+P_0(t+\delta)]^2}{2Q_0(v)}.
 \label{collision:eq:unchanged}
\end{align}
\begin{samepage}
The first fraction on the final right-hand side is $F(0)$. Moreover,
Eq.~\eqref{collision:eq:shifts} gives
\[
 P_{1/2}(t)=P_{1/2}(t+\delta)
 =\frac{P_0(t)+P_0(t+\delta)}{2},
\]
so the second fraction is $F(1/2)$.
\end{samepage}
Thus this pair satisfies $F(p)=(1-\theta)F(0)+\theta F(1/2)$.
If $Q_0(v)=0$, then $F(p)=0$ for every $p$, and the same equality holds.
Summing over all pairs therefore yields
\[
 \mathcal C(p)=(1-\theta)\mathcal C(0)+\theta\mathcal C(1/2).
\]
When the denominators remain unchanged, the desired bound is an equality.

\subsubsection*{Case 2: the two values on the $V$ side are also exchanged ($\gamma\ne0$)}

Now $v$ and $v+\gamma$ are distinct, and both denominators vary with $p$.
Fix one pair and, only for the calculation that follows, abbreviate the four
original probabilities as
\[
 a=P_0(t),\quad b=P_0(t+\delta),\qquad
 c=Q_0(v),\quad d=Q_0(v+\gamma).
\]
Since $V=MT$ and $v=Mt$, the event $T=t$ implies $V=v$, and the event
$T=t+\delta$ implies $V=v+\gamma$. These event inclusions also hold with
the $j$th bit fixed at zero, giving $a\le c$ and $b\le d$.
Equations~\eqref{collision:eq:pair-definition} and~\eqref{collision:eq:shifts}
give
\begin{equation}
 F(p)=\frac{[(1-p)a+pb]^2}{(1-p)c+pd}
       +\frac{[(1-p)b+pa]^2}{(1-p)d+pc}.
 \label{collision:eq:two-terms}
\end{equation}
First assume $c,d>0$, so both denominators are positive even at $p=0,1$.
Putting fractions over a common denominator gives the identity
\[
 \frac{a^2}{c}+\frac{b^2}{d}
 =\frac{(a+b)^2}{c+d}+\frac{(ad-bc)^2}{cd(c+d)}.
\]
Apply this identity to the four mixed probabilities appearing in
Eq.~\eqref{collision:eq:two-terms}. The two probabilities that are squared
sum to $a+b$, the denominators sum to $c+d$, and the difference of cross
products is
\[
 [(1-p)a+pb][(1-p)d+pc]
 -[(1-p)b+pa][(1-p)c+pd]
 =(1-2p)(ad-bc).
\]
Consequently,
\begin{equation}
 F(p)=\frac{(a+b)^2}{c+d}
 +\frac{(1-2p)^2(ad-bc)^2}
 {(c+d)[(1-p)c+pd][(1-p)d+pc]}.
 \label{collision:eq:pair-contribution}
\end{equation}
The first term is independent of $p$, so only the second needs to be bounded.
The product of the two mixed denominators satisfies
\begin{equation}
 [(1-p)c+pd][(1-p)d+pc]
 =cd+p(1-p)(c-d)^2\ge cd.
 \label{collision:eq:denominator}
\end{equation}
The first term in Eq.~\eqref{collision:eq:pair-contribution} is $F(1/2)$,
and the second term at $p=0$ is $F(0)-F(1/2)$.
Since $(1-2p)^2=1-\theta$, Eq.~\eqref{collision:eq:denominator} gives
\begin{align}
 F(p)
 &\le F(1/2)+(1-\theta)[F(0)-F(1/2)]\notag\\
 &=(1-\theta)F(0)+\theta F(1/2).
 \label{collision:eq:pair-bound}
\end{align}
This is the inequality corresponding to the equality in Case~1.

It remains to check zero original denominators.
If $c=0<d$, then $a=0$, and for $0<p<1$ the two terms in
Eq.~\eqref{collision:eq:two-terms} are $pb^2/d$ and $(1-p)b^2/d$.
Their sum is $b^2/d$, also at $p=0,1$ after omitting zero-probability terms.
Similarly, if $d=0<c$, the contribution is always $a^2/c$.
If $c=d=0$, then $a=b=0$, and the contribution is always zero.
Thus the endpoint bound is an equality for each of these pairs.

Summing Eq.~\eqref{collision:eq:pair-bound} and the equalities for
zero-probability cases over all pairs gives Eq.~\eqref{collision:eq:onebit}.
This completes both cases of the proof.
\end{proof}

The one-bit argument uses the fact that adding the independent $j$th bit
exchanges two points of $T$ and induces the corresponding addition on $V$.
It also includes the case where $V$ does not change.
In the next subsection, we use this bound to replace coordinates successively
with fixed zeros or uniform bits, obtaining the middle inequality in
Eq.~\eqref{collision:eq:roadmap}.

\subsection{Iterating over all bits}
\label{collision:sec:iteration}

The preceding subsection varied only one bit. We now iterate the same bound
over all bits, expressing an upper bound on $\mathcal C$ in terms of uniform
inputs. This is where we use independence of $Z_1,\ldots,Z_n$.
Apply Lemma~\ref{collision:lem:onebit} to the first bit, then apply it to the
second bit in each of the resulting two terms, and continue through the $n$th
bit. Replacing a bit with a fixed zero or an independent uniform bit preserves
its independence from the other bits, so the lemma applies at every step.

For example, with two bits and explicit probability arguments,
\begin{align*}
 \mathcal C(p_1,p_2)
 &\le(1-\theta_1)\mathcal C(0,p_2)
       +\theta_1\mathcal C(1/2,p_2)\\
 &\le(1-\theta_1)(1-\theta_2)\mathcal C(0,0)
       +(1-\theta_1)\theta_2\mathcal C(0,1/2)\\
 &\quad+\theta_1(1-\theta_2)\mathcal C(1/2,0)
       +\theta_1\theta_2\mathcal C(1/2,1/2).
\end{align*}
The same expansion applies for general $n$. If $S$ is the set of coordinates
made uniform, the coefficient of its term is precisely $\mu(S)$ from
Eq.~\eqref{collision:eq:mu}. Each term has input uniform on $\F^S$, so
Lemma~\ref{collision:lem:rank} gives
\begin{equation}
 \boxed{\mathcal C\le\sum_{S\subseteq[n]}\mu(S)\mathcal C_S
       =\E_\mu 2^{-\kappa(S)}.}
 \label{collision:eq:C-bound}
\end{equation}
This is the middle inequality in Eq.~\eqref{collision:eq:roadmap}.
The argument does not claim that the original distribution \emph{equals}
a mixture of these uniform distributions. It applies an upper bound on
$\mathcal C$ one bit at a time.

\subsection{Comparing the two averages}
\label{collision:sec:comparison}

Equation~\eqref{collision:eq:C-bound} gives
$\mathcal C\le\E_\mu2^{-\kappa(S)}$. We now prove the final inequality in
Eq.~\eqref{collision:eq:roadmap},
\[
 \E_\mu2^{-\kappa(S)}\le\mathcal J.
\]
The proof has two stages. First, we express $\mathcal J$ using the subset
distribution $\mu$. Second, we apply a positive-correlation inequality to
that representation.

\subsubsection{Stage 1: expressing the target quantity using independent copies}
\label{collision:sec:copies}

We first view $\mathcal J$ as a ratio of collision probabilities for two
independent copies, and obtain a representation using $\mu$.

Let $Z'$ be an independent copy of $Z$ and put $D=Z+Z'$.
Then $T'=L_1Z'$ and $V'=L_2Z'$ form an independent copy $(T',V')$ of $(T,V)$;
within this copy, $V'=MT'$. By linearity,
\[
 T=T'\ \Longleftrightarrow\ L_1D=0,\qquad
 V=V'\ \Longleftrightarrow\ L_2D=0.
\]
The probability that two independent copies coincide is the sum of squared
probabilities. Thus Eq.~\eqref{collision:eq:J} becomes
\begin{equation}
 \mathcal J=\frac{\Prob(L_1D=0)}{\Prob(L_2D=0)}.
 \label{collision:eq:J-copy}
\end{equation}
Since $T=T'$ implies $V=V'$, it also has the interpretation
\[
 \mathcal J=\Prob(T=T'\mid V=V').
\]
For $\mathcal C$, one first selects a common $v$ with probability $Q(v)$,
then independently draws two values of $T$ from the conditional distribution.
For $\mathcal J$, one first draws $(T,V)$ and $(T',V')$ independently, then
conditions on $V=V'$. This difference produces the $Q(v)^2$ weights in
Eq.~\eqref{collision:eq:J}.

The coordinates $D_j$ are independent, with
\begin{equation}
 \Prob(D_j=1)=p_j(1-p_j)+(1-p_j)p_j
       =2p_j(1-p_j)=\frac{\theta_j}{2}.
 \label{collision:eq:D-law}
\end{equation}
The distribution of $D$ can therefore be generated \emph{exactly} as follows.
\begin{samepage}
\begin{enumerate}
\item Select each coordinate independently with probability $\theta_j$,
and let $S$ be the set of selected coordinates.
\item Assign independent uniform bits to coordinates in $S$, and zeros
to coordinates outside $S$.
\end{enumerate}
\end{samepage}
Indeed, each coordinate is one with probability $\theta_j\times\tfrac12$,
and the coordinates have the required independence.
This is the \emph{same} subset distribution $\mu$ that appears in the bound
of Appendix~\ref{collision:sec:iteration}.

For fixed $S$, the input $D$ is uniform on $\F^S$. Uniformity of the image,
as in Lemma~\ref{collision:lem:rank}, gives
\[
 \Prob(L_iD=0\mid S)=2^{-r_i(S)}.
\]
Consequently,
\begin{equation}
 \mathcal J=\frac{\E_\mu 2^{-r_1(S)}}{\E_\mu 2^{-r_2(S)}}
       =\frac{\E_\mu[a(S)b(S)]}{\E_\mu a(S)},
 \quad a(S):=2^{-r_2(S)},\quad b(S):=2^{-\kappa(S)}.
 \label{collision:eq:J-rank}
\end{equation}
The last equality uses $r_1=r_2+\kappa$.
The first ratio in Eq.~\eqref{collision:eq:J-rank} is already the desired
quantity $2^{-[H_2(T)-H_2(V)]}$. The upper bound obtained in
Eq.~\eqref{collision:eq:C-bound}, however, is
\[
 \E_\mu2^{-\kappa(S)}
 =\E_\mu\!\left[\frac{2^{-r_1(S)}}{2^{-r_2(S)}}\right].
\]
It is an \emph{average of ratios}, whereas Eq.~\eqref{collision:eq:J-rank}
is a \emph{ratio of averages}; these need not be equal.

In the notation $a(S),b(S)$, our upper bound is $\E_\mu b(S)$ and the target
is $\E_\mu[a(S)b(S)]/\E_\mu a(S)$. The latter is an average of $b(S)$
reweighted by $a(S)$. This puts the two quantities in comparable form.
The next stage shows that the reweighting cannot decrease the average.

\subsubsection{Stage 2: comparing the averages by positive correlation}
\label{collision:sec:association}

By Eq.~\eqref{collision:eq:J-rank}, it suffices to show
$\E_\mu[ab]\ge\E_\mu a\,\E_\mu b$.
We prove this positive correlation using the fact that both $a(S)$ and $b(S)$
are decreasing with respect to inclusion of $S$.

Adding coordinates to $S$ cannot decrease $r_2(S)$.
By Lemma~\ref{collision:lem:rank}, it cannot decrease $\kappa(S)$ either.
Thus both $a(S)$ and $b(S)$ in Eq.~\eqref{collision:eq:J-rank} are decreasing
with respect to inclusion. The needed fact is Harris's inequality for product
measures~\cite{Harris}. We include its inductive proof to keep this appendix
self-contained.

\begin{lemma}[Positive correlation under independent selection]
\label{collision:lem:association}
Let $S$ be a random set obtained by selecting each coordinate independently.
If real-valued functions $a,b$ are both decreasing with respect to inclusion,
then
\begin{equation}
 \E[a(S)b(S)]\ge\E a(S)\,\E b(S).
 \label{collision:eq:association}
\end{equation}
\end{lemma}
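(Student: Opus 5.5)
We prove Lemma~\ref{collision:lem:association} by induction on the number of coordinates $n$, which is the standard proof of Harris's inequality. Write $S\subseteq[n]$ with coordinate $j$ included independently with probability $\theta_j$. The functions $a,b$ are defined on the finite set of subsets, so all expectations are finite sums and no integrability issues arise.

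\emph{Base case $n=1$.} Here $S$ takes only two values, $\emptyset$ and $\{1\}$. Write $a_0=a(\emptyset)$, $a_1=a(\{1\})$, and similarly $b_0,b_1$, with $\theta=\theta_1$. A direct expansion gives
\[
 \E[ab]-\E a\,\E b=\theta(1-\theta)(a_0-a_1)(b_0-b_1).
\]
Both factors $a_0-a_1$ and $b_0-b_1$ are nonnegative because $a$ and $b$ are decreasing, so the right-hand side is nonnegative. This two-point identity is the entire content of the base case.

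\emph{Inductive step.} Assume the claim for $n-1$ coordinates. Split $S=S'\cup R$, where $S'=S\cap[n-1]$ and $R=S\cap\{n\}$; these two parts are independent under the product measure. For fixed $R$, the maps $S'\mapsto a(S'\cup R)$ and $S'\mapsto b(S'\cup R)$ are decreasing in $S'$. The induction hypothesis, applied conditionally on $R$, therefore gives
\[
 \E[ab\mid R]\ge \bar a(R)\,\bar b(R),\qquad
 \bar a(R):=\E[a\mid R],\quad \bar b(R):=\E[b\mid R].
\]
Because $S'$ is independent of $R$, we have $\bar a(R)=\E_{S'}a(S'\cup R)$, and this average inherits monotonicity: it is decreasing in $R$, since adding $n$ to $R$ decreases the integrand pointwise in $S'$. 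The same holds for $\bar b$. The base case applied to $\bar a,\bar b$ as functions of the one-coordinate set $R$ gives $\E[\bar a\bar b]\ge\E\bar a\,\E\bar b=\E a\,\E b$. Taking expectations over $R$ in the conditional bound and chaining the two inequalities proves Eq.~\eqref{collision:eq:association}.

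The only point needing care is this monotonicity of the conditional averages. It relies on the product structure: the law of $S'$ does not depend on $R$, so averaging over $S'$ preserves the decreasing property in $R$. Degenerate selection probabilities $\theta_j\in\{0,1\}$ pose no problem, because the base-case identity remains valid there (it then gives equality).

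In the application, $a(S)=2^{-r_2(S)}$ and $b(S)=2^{-\kappa(S)}$ are decreasing by Lemma~\ref{collision:lem:rank} together with the monotonicity of $r_2$. The lemma then yields $\E_\mu[ab]\ge\E_\mu a\,\E_\mu b$, and Eq.~\eqref{collision:eq:J-rank} gives $\mathcal J\ge\E_\mu2^{-\kappa(S)}$.
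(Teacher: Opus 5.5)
Your proof is correct and follows the paper's argument. Both induct on the number of coordinates, condition on the last coordinate, and apply the induction hypothesis to the remaining independent coordinates. Both then finish with the two-point identity $q(1-q)(A_0-A_1)(B_0-B_1)\ge 0$, using that the conditional averages inherit monotonicity from the product structure. The differences are cosmetic: you start at $n=1$ rather than $n=0$, and you handle degenerate probabilities by defining the conditional averages as $\E_{S'}$-averages instead of discarding those coordinates.
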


\begin{proof}
We induct on the number of coordinates. With no coordinates, $S$ is
deterministic and equality holds. A coordinate selected with probability
zero or one may be fixed to its deterministic value, reducing the number
of coordinates. We may therefore suppose that the final coordinate is
selected with probability $0<q<1$.

Let $J\in\{0,1\}$ indicate whether this coordinate is selected, and put
\[
 A_i:=\E[a\mid J=i],\qquad B_i:=\E[b\mid J=i]
 \qquad(i=0,1).
\]
Conditioning on $J$ preserves independence of the remaining coordinates,
so the induction hypothesis gives
\[
 \E[ab]\ge(1-q)A_0B_0+qA_1B_1.
\]
On the other hand,
\[
 \E a\,\E b=[(1-q)A_0+qA_1][(1-q)B_0+qB_1].
\]
Subtracting the latter expression from the preceding right-hand side and
expanding gives
\begin{equation}
 q(1-q)(A_0-A_1)(B_0-B_1).
 \label{collision:eq:association-difference}
\end{equation}
The remaining coordinates have the same distribution for $J=0$ and $J=1$,
and adding the final coordinate cannot increase either function.
Hence $A_0\ge A_1$ and $B_0\ge B_1$.
Expression~\eqref{collision:eq:association-difference} is nonnegative,
completing the induction.
\end{proof}

Applying the lemma to Eq.~\eqref{collision:eq:J-rank} and using
$\E_\mu a>0$, we obtain
\begin{equation}
 \mathcal J
 =\frac{\E_\mu[ab]}{\E_\mu a}
 \ge\frac{\E_\mu a\,\E_\mu b}{\E_\mu a}
 =\E_\mu 2^{-\kappa(S)}.
 \label{collision:eq:J-bound}
\end{equation}
This is the final inequality in Eq.~\eqref{collision:eq:roadmap}.

\subsection{Completion of the proof of Theorem~\ref{collision:thm:main}}
\label{collision:sec:completion}

\begin{proof}
Combining Eqs.~\eqref{collision:eq:jensen}, \eqref{collision:eq:C-bound},
and~\eqref{collision:eq:J-bound} gives
\begin{equation}
 2^{-H(T\mid V)}\le\mathcal C
 \le\E_\mu 2^{-\kappa(S)}
 \le\mathcal J=2^{-[H_2(T)-H_2(V)]}.
 \label{collision:eq:complete}
\end{equation}
Taking negative logarithms reverses the inequalities and proves
Eq.~\eqref{collision:eq:strong}. The argument includes bit probabilities
zero, $1/2$, and one.
\end{proof}

\subsection{A two-bit example of all the quantities}
\label{collision:sec:example}

For independent $Z_1,Z_2\sim\Bern(1/4)$, let
\[
 T=(Z_1,Z_2),\qquad V=Z_1+Z_2.
\]
The variable $V$ retains only the parity of the two bits. The distribution is
\begin{center}
\begin{tabular}{cccc}
\toprule
$t$ & $P(t)$ & $v$ & $\Prob(T=t\mid V=v)$\\
\midrule
$00$ & $9/16$ & 0 & $9/10$\\
$11$ & $1/16$ & 0 & $1/10$\\
$01$ & $3/16$ & 1 & $1/2$\\
$10$ & $3/16$ & 1 & $1/2$\\
\bottomrule
\end{tabular}
\end{center}
Here $Q(0)=5/8$ and $Q(1)=3/8$.
Writing $h_2(u)=-u\log_2u-(1-u)\log_2(1-u)$ for the binary entropy,
we have
\begin{align}
 H(T\mid V)&=\frac58 h_2(1/10)+\frac38
           \simeq0.668122,\notag\\
 2^{-H(T\mid V)}&\simeq0.629325,\notag\\
 \mathcal C&=\frac58\left[\left(\frac9{10}\right)^2+
                    \left(\frac1{10}\right)^2\right]
                  +\frac38\left[2\left(\frac12\right)^2\right]
             =\frac7{10}.
 \label{collision:eq:exampleC}
\end{align}

Since $\theta_1=\theta_2=3/4$, the subsets and their dimensions are
\begin{center}
\begin{tabular}{ccccc}
\toprule
$S$ & $\mu(S)$ & $r_1(S)$ & $r_2(S)$ & $\kappa(S)$\\
\midrule
$\varnothing$ & $1/16$ & 0 & 0 & 0\\
$\{1\}$ & $3/16$ & 1 & 1 & 0\\
$\{2\}$ & $3/16$ & 1 & 1 & 0\\
$\{1,2\}$ & $9/16$ & 2 & 1 & 1\\
\bottomrule
\end{tabular}
\end{center}
If only one bit can vary, its value is determined by the parity $V$, leaving
no degree of freedom. If both can vary, simultaneous flipping by $(1,1)$
preserves the parity, leaving one degree of freedom. Therefore
\[
 \E_\mu 2^{-\kappa(S)}=\frac7{16}+\frac9{16}\frac12=\frac{23}{32}.
\]
Also,
\[
 \sum_tP(t)^2=\frac{25}{64},\qquad
 \sum_vQ(v)^2=\frac{17}{32},\qquad
 \mathcal J=\frac{25}{34}.
\]
Numerically, the chain proved above is
\begin{equation}
 \underbrace{0.629325\ldots}_{2^{-H(T\mid V)}}
 <\underbrace{0.7}_{\mathcal C}
 <\underbrace{0.71875}_{\E_\mu2^{-\kappa(S)}}
 <\underbrace{0.735294\ldots}_{\mathcal J}.
 \label{collision:eq:example-chain}
\end{equation}
Every step is strict in this example. In particular,
$\E_\mu2^{-\kappa(S)}=23/32$ differs from $\mathcal J=25/34$, illustrating
why the comparison of averages in Appendix~\ref{collision:sec:association}
is needed.
We also have $c_0=41/50>c_1=1/2$. The weights on $v=0,1$ change from
$(5/8,3/8)$ for $\mathcal C$ to $(25/34,9/34)$ for $\mathcal J$.
The greater weight on $v=0$, where collisions are more likely, directly
explains $\mathcal J>\mathcal C$. The final entropy comparison is
\[
 H(T\mid V)\simeq0.668122
 \ >\ H_2(T)-H_2(V)=\log_2\frac{34}{25}\simeq0.443607.
\]

\subsection{Application to the main text and the role of each step}
\label{collision:sec:application}

To apply the result in the main-text notation, set
\[
 L_1=G,\qquad L_2=G_0,\qquad Z=W,\qquad
 p_j=\frac{1-\gamma}{2}.
\]
Here $\gamma$ is the scalar overlap in Eq.~\eqref{eq:alphabet}.
The rows of $G_0$ are the final rows of $G$, so
$\ker G\subseteq\ker G_0$. Thus the variables of this appendix become
$T=GW$ and $V=G_0W$, and the spectral identification in
Lemma~\ref{lem:spectrum} gives the eigenvalue lists of $\rho_D$ and
$\rho_{D_0}$, respectively. Applying Eq.~\eqref{collision:eq:goal} to these
lists proves Eq.~\eqref{eq:renyi}. In this application,
$\theta_j=1-\gamma^2$. The random set $S$ in this appendix specifies the
coordinates made \emph{uniform}.

The uses of independence and linearity can be summarized as follows.
\begin{itemize}
\item Adding one bit mixes the original probabilities $P_p,Q_p$ in pairs.
If the distribution of $V$ is unchanged, expanding the squares gives the
endpoint bound with equality. If the values on the $V$ side are exchanged
as well, Eq.~\eqref{collision:eq:pair-contribution} and the denominator
comparison in Eq.~\eqref{collision:eq:denominator} give the inequality.
\item Independence of the bits lets us iterate the bound. At the endpoints,
some input coordinates are uniform, and linearity makes the conditional
distribution uniform as well.
\item The difference of two independent copies has an exact representation
as a mixture of uniform distributions using the same $\theta_j$.
\item Finally, both the rank of $V$ and the dimension of changes in $T$
that preserve $V$ increase with $S$. This compares the two averages in
the required direction.
\end{itemize}

Equation~\eqref{collision:eq:goal} is also the comparison between orders one
and two in Ref.~\cite[Theorem~1]{Vijay}. The use of random coordinate subsets
and dimensions shares the structure of Step~I of that reference.
Here we pass through the averaged conditional collision probability
$\mathcal C$ and derive the needed bounds by a one-bit calculation with
fractions and induction over finitely many coordinates. Thus the comparison
required for Eq.~\eqref{eq:renyi} can be verified without relying on an
external channel-comparison theorem.

\section{An elementary proof of the coset inequality}\label{app:coset-proof}
We give a self-contained counting proof of the coset
inequality~\eqref{eq:cr}. Throughout this appendix, $G$ is a binary
linear subspace and $0<q\le1$. The proof has two steps: first, a linear
space has at least as much weight as any of its translates; second, we
apply this observation after grouping pairs of words by their sum.

\paragraph{The zero coset has maximum weight.}
For every linear subspace $H\subseteq\F^m$, every $w\in\F^m$, and every $0<s\le1$,
we claim that
\begin{equation}
 Z_{w+H}(s)\le Z_H(s).
 \label{eq:coset-zero-max}
\end{equation}
Choose a random subset $S\subseteq\{1,\ldots,m\}$ by including each
coordinate independently with probability $s$, and define the coordinate
subspace
\begin{equation}
 V_S:=\{x\in\F^m:x_i=0\text{ for all }i\notin S\}.
 \label{eq:coset-coordinate-space}
\end{equation}
This random subset is only an auxiliary device for proving the
inequality; it is not a random choice made by the protocol.
A fixed vector $x$ belongs to $V_S$ precisely when all its nonzero
coordinates are included in $S$. Hence
\begin{equation}
 \mathbb E\bigl[|(w+H)\cap V_S|\bigr]
 =\sum_{x\in w+H}\Pr[x\in V_S]
 =\sum_{x\in w+H}s^{\wt(x)}
 =Z_{w+H}(s).
 \label{eq:coset-random-count}
\end{equation}
For each fixed $S$, the intersection $(w+H)\cap V_S$ is either empty
or a translate of $H\cap V_S$. Indeed, if $x_0$ belongs to that
intersection, then the intersection is exactly $x_0+(H\cap V_S)$.
Its cardinality is therefore at most $|H\cap V_S|$. Taking expectations
and applying Eq.~\eqref{eq:coset-random-count} also with $w=0$ proves
Eq.~\eqref{eq:coset-zero-max}.

\paragraph{Pairs with a fixed sum.}
Now take $G\subseteq\F^n$ and fix any $z\in\F^n$. Let
\begin{equation}
 I_z:=\{i:z_i=0\},\qquad
 \pi_z:\F^n\longrightarrow\F^{I_z},\qquad
 H_z:=\pi_zG,\qquad
 m_z:=|G\cap\ker\pi_z|,
 \label{eq:coset-projection}
\end{equation}
where $\pi_z$ is coordinate projection. For a pair $(a,a+z)$, a
coordinate with $z_i=1$ contributes exactly one to the sum of the two
Hamming weights. A coordinate with $z_i=0$ has equal entries in the
two words and contributes either zero or two. Thus
\begin{equation}
 \wt(a)+\wt(a+z)=\wt(z)+2\wt(\pi_za).
 \label{eq:coset-pair-weight}
\end{equation}
This explains why the weight parameter becomes $q^2$ on the projected
coordinates.

The image of $u+G$ under $\pi_z$ is the coset $\pi_zu+H_z$.
Every word in this image has exactly $m_z$ preimages in $u+G$, since
each nonempty fiber is a translate of $G\cap\ker\pi_z$.
Consequently, Eq.~\eqref{eq:coset-zero-max} with $s=q^2$ gives
\begin{align}
 \sum_{a\in u+G}q^{\wt(a)+\wt(a+z)}
 &=q^{\wt(z)}m_zZ_{\pi_zu+H_z}(q^2)\notag\\
 &\le q^{\wt(z)}m_zZ_{H_z}(q^2)\notag\\
 &=\sum_{a\in G}q^{\wt(a)+\wt(a+z)}.
 \label{eq:coset-fixed-sum}
\end{align}
The comparison holds for every $z$, with no assumption that $z$ lies
in $G$.

\paragraph{Summing over the possible sums.}
For $a\in u+G$ and $b\in v+G$, their sum $z=a+b$ belongs to
$u+v+G$. Conversely, every $a\in u+G$ and $z\in u+v+G$ determine
a unique $b=a+z\in v+G$. Grouping the pairs by $z$ and using
Eq.~\eqref{eq:coset-fixed-sum}, we obtain
\begin{align}
 Z_{u+G}(q)Z_{v+G}(q)
 &=\sum_{z\in u+v+G}\ \sum_{a\in u+G}
       q^{\wt(a)+\wt(a+z)}\notag\\
 &\le\sum_{z\in u+v+G}\ \sum_{a\in G}
       q^{\wt(a)+\wt(a+z)}\notag\\
 &=Z_G(q)Z_{u+v+G}(q).
 \label{eq:coset-pair-sum}
\end{align}
The last equality uses the same bijection with $a\in G$ and
$b\in u+v+G$. This proves Eq.~\eqref{eq:cr}.

All counting statements remain valid for zero-dimensional spaces or
an empty coordinate set $I_z$; in the latter case $H_z=\{0\}$ and
$m_z=|G|$. The endpoint $q=1$ is included directly: every coset has
weight $|G|$, and Eq.~\eqref{eq:cr} is an equality. The two further
comparisons in Lemma~\ref{lem:cosets} then follow by the summations
already given in its proof.

\end{document}